 \newcommand{\Rmnum}[1]{\expandafter\@slowromancap\romannumeral #1@}
\newtheorem{theorem}{Theorem}[section]
\newtheorem{proposition}[theorem]{Proposition}
\newtheorem{remark}[theorem]{Remark}
\newcommand{\R}{{\mathbb R}}
\newcommand{\C}{{\mathbb C}}
\newcommand{\be}{\begin{equation}}
\newcommand{\ee}{\end{equation}}
\newcommand{\bea}{\begin{eqnarray}}
\newcommand{\eea}{\end{eqnarray}}
\newcommand{\ba}{\begin{array}}
\newcommand{\ea}{\end{array}}
\newcommand{\ol}{\overline}
\newcommand{\id}{\mathbb{I}}
\newcommand{\re}{\mathrm{Re}}
\newcommand{\im}{\mathrm{Im}}
\newcommand{\eps}{\varepsilon}
\newcommand{\sig}{\sigma}
\newcommand{\Sig}{\Sigma}
\newcommand{\Gam}{\Gamma}
\newcommand{\om}{\omega}
\newcommand{\dta}{\delta}
\newcommand{\tha}{\theta}
\numberwithin{equation}{section}
\begin{document}
\title[RHP for the complex short pulse equation]{Long-time asymptotic behavior for the complex short pulse equation}

\author[J.Xu]{Jian Xu}
\address{College of Science\\
University of Shanghai for Science and Technology\\
Shanghai 200093\\
People's  Republic of China}
\email{jianxu@usst.edu.cn}

\author[E.Fan]{Engui Fan*}
\address{School of Mathematical Sciences, Institute of Mathematics and Key Laboratory of Mathematics for Nonlinear Science\\
Fudan University\\
Shanghai 200433\\
People's  Republic of China}
\email{corresponding author: faneg@fudan.edu.cn}

\date{\today}

\begin{abstract}
In this paper, we consider the initial value problem   for the complex short pulse equation with  a Wadati-Konno-Ichikawa type Lax pair.
 We show that the solution to the initial value problem  has a parametric expression in terms of the solution of $2\times 2$-matrix Riemann-Hilbert problem,
  from which  an   implicit  one-soliton solution
 is obtained   on  the  discrete spectrum.  While on the continuous spectrum  we further 
 establish the  explicit long-time asymptotic behavior of the non-soliton solution by using Deift-Zhou nonlinear steepest descent method.\\[6pt]
{\bf Key words:} Complex short pulse equation; Initial value problem; Riemann-Hilbert problem; Nonlinear steepest descent method; Long-time asymptotics.

\end{abstract}

\maketitle

\section{Introduction}

We  study the long-time asymptotic behavior of solution to the complex short pulse (CSP) equation
\be\label{spe}
u_{xt}+u+\frac{1}{2}(|u|^2u_{x})_{x}=0,
\ee
formulated on the whole line $x\in \R$ with the initial value data
\be\label{spe-ini}
u(x,t=0)=u_0(x)\in \mathcal {S}(\R),
\ee
where $\mathcal{S}(\R)$ denotes the Schwartz space.
\par

The motivation for  investigating  the CSP equation is as follows: In nonlinear optics, it is well known that the nonlinear Schr\"odinger (NLS) equation was always used to model the slowly varying wave trains \cite{yy-1983,hk-1995,a-2001}, which was found that it can be solved by the inverse scattering transform method \cite{zs-nls}. However, when the width of optical pulseis in the order of femtosecond ($10^{-15}$s), the NLS equation becomes less
accurate \cite{r-1992}. Sch\"afer and Wayne
proposed a so-called short pulse (SP) equation
\be\label{sp}
q_{xt}=q+\frac{1}{6}(q^3)_{xx}
\ee
in \cite{swpd}, which provided an increasingly better approximation
to the corresponding solution of the Maxwell equations \cite{cjsw}. Noticing that $q(x,t)$ in equation (\ref{sp}) is a real-valued function, the one-soliton solution (loop soliton) to
the SP equation (\ref{sp}) has no physical interpretation \cite{ssjpa,mjpsj}. Recently, an improvement (\ref{spe}) for the SP equation
was proposed in \cite{fpd-2015}. In contrast with the real-valued function $q(x,t)$ in SP equation (\ref{sp}), $u(x,t)$ in equation (\ref{spe}) is a complex-valued function.  Since
the complex-valued function can contain the information of both amplitude
and phase, it is more appropriate for the description of the optical
waves \cite{yy-1983}. Hence, the equation (\ref{spe}) was so-called complex short pulse equation.
The CSP equation can be viewed as an analogue of the NLS equation
in the ultra-short regime when the width of optical pulse is of the order $10^{-15}$s, see \cite{fpd-2015}. 

\par

It is founded that the CSP equation also admitted a Wadati-Konno-Ichikawa (WKI)-type Lax pair like  the SP equation \cite{fpd-2015, ssjpsj}.
The soliton solutions were obtained by using Hirota method \cite{fpd-2015},   and the multi-breather and higher order rogue
wave solutions to the CSP equation were constructed by the Darboux transformation method in \cite{lfzpd-2016}.

\par


\par
Recently,  in \cite{jiansp} we  obtained  the explicit leading order long-time asymptotic behavior of the solution $q(x,t)$ to the SP equation (\ref{sp})
by using the nonlinear steepest descent method \cite{dz}.  Here we extend above results to  give  the  asymptotic  behavior of  solution  $u(x,t)$  of  the CSP equation,
but it   will    be much   different from  that   on the SP equation (\ref{sp})  in the following three aspects.

{\bf $(i).$} To obtain the Riemann-Hilbert problem corresponding to the initial value problem for the CSP equation, we need an extra transformation (\ref{Transform-d}) when we try to control the behavior of the eigenfunctions when the spectral variable $k\rightarrow \infty$ during our spectral analysis;

 {\bf $(ii).$} In the CSP equation case, there do not exist a symmetry condition
\be\label{sym-sp}
M(x,t,-k)=\sig_2 M(x,t,k)\sig_2
\ee
satisfied by the SP equation;

{\bf $(iii).$} As (\ref{sym-sp}) isn't valid, the order $O(1)$ term of the asymptotic behavior of scale function $\dta(k)$ defined by (\ref{dtadef}) as $k\rightarrow 0$ doesn't equal one, see (\ref{dtalargek}) and (\ref{dta0def}). This fact will affect our final asymptotic formulae.
 {\bf $4).$} We need calculate the model problem around $k=k_0$ and $k=-k_0$ (where $k_0$'s definition is in subsection 5.3), respectively, see subsection (5.3.4). The reason is (\ref{sym-sp}) not valid, again. This leads to there are two phase functions $\tilde \phi_1$ and $\tilde \phi_2$ defined by (\ref{tildephi-12}) which are contained in the asymptotic formulae (\ref{uxtasy-final}), not like the SP equation case.

\par
Organization of this paper is as  follows.  In section 2,  since the associated Lax pair of short pulse (\ref{spe}) has singularities at $k=0$ and $k=\infty$, we perform the spectral analysis to deal with the two singularities, respectively.  In section 3, we formulate the associated Riemann-Hilbert in an alternative space variable $y$ instead of the original space variable $x$.
In this way,  we can reconstruct the solution $u(x,t)$ parameterized from the solution of the Riemann-Hilbert problem via the asymptotic behavior of the spectral variable at $k=0$. 
In section 4, we obtain the one-soliton solution of the CSP equation under the assumption of $a(k)$ having one single zero point. 
In section 5, we obtain the asymptotic relation between $y$ and $x$ through  analyzing the vector Riemann-Hilbert problem with the nonlinear steepest descent method.
finally,  we get  the leading order asymptotic behavior of the solution $u(x,t)$.

\section{Spectral Analysis}
To analyze the long-time asymptotic behavior of the solution of the IVP for the CSP equation on the  line by employing the nonlinear steepest descent method, the first step is to change  the IVP problem into
 a Riemann-Hilbert problem based i=on the fact  that the CSP equation admits a WKI-type Lax pair
\begin{subequations}\label{Laxpair}
\be\label{Lax-x}
\Psi_x=U(x,t,k)\Psi,
\ee
\be\label{Lax-t}
\Psi_t=V(x,t,k)\Psi,
\ee
where
\be\label{Udef}
U=ik U_1=ik(\sig_3+U_{0x}).
\ee
\be\label{Vdef}
V=-\frac{ik}{2}|u|^2 U_1-\frac{1}{4ik}\sig_3+\frac{1}{2}V_0
\ee
with
\be\label{U1def}
U_0=\left(\ba{cc}0&u\\\bar u&0\ea\right),\sig_3=\left(\ba{cc}1&0\\0&-1\ea\right),V_0=\left(\ba{cc}0&u\\-\bar u&0\ea\right).
\ee
\end{subequations}
Here, the $\bar u$ means the conjugate of the complex function $u$.
\par
We can obtain the scattering data by using the $x-$part of Lax pair for analyzing the IVP for the integrable equation via inverse scattering transform method. The $t-$part of Lax pair is only used to determine the time evolution of the scattering data. However,
there are two singularities at $k=\infty$ and $k=0$ in the Lax pair (\ref{Laxpair}).
In order to construct the solution $u(x,t)$ of the CSP equation (\ref{spe}), we need use the expansion of the eigenfunction as spectral parameter $k\rightarrow 0$. This is similar to the short pulse equation \cite{jiansp}, the Camassa-Holm equation \cite{amkst}, and modified Hunter-Saxton equation \cite{amszip}. Hence, in the following we use two different transformations to analyze these two singularities ($k=0$ and $k=\infty$), respectively.

\subsection{For singularity at $k=0$}
$ $
\par

\ \ Introducing the following transformation
\be
\Psi(x,t,k)=\mu^0(x,t,k)e^{(ikx-\frac{t}{4ik})\sig_3},
\ee
then we get the Lax pair of $\mu^0$
\be\label{mu0Laxe}
\left\{
\ba{l}
\mu^0_x-ik[\sig_3,\mu^0]=V_1^0\mu^0,\\
\mu^0_t+\frac{1}{4ik}[\sig_3,\mu^0]=V_2^0\mu^0,
\ea
\right.
\ee
where
\be
V_1^0=ikU_0,\quad V_2^0=-\frac{ik}{2}|u|^2 U_1+\frac{1}{2}V_0.
\ee

Letting $\hat A$ denotes the operators which acts on a $2\times 2$ matrix $X$ by $\hat A X=[A,X]$ , then the Lax pair of $\mu^0$ (\ref{mu0Laxe}) can be written as
\be
d(e^{-(ikx-\frac{t}{4ik})\hat \sig_3}\mu^0)=W^0(x,t,k),
\ee
where $W^0(x,t,k)$ is the closed one-form defined by
\be
W^0(x,t,k)=e^{-(ikx-\frac{t}{4ik})\hat \sig_3}(V_1^0dx+V_2^0dt)\mu^0.
\ee

We define two eigenfunctions $\{\mu^0_j\}_{j=1}^2$ of (\ref{mu0Laxe}) by the Volterra integral equations,
\begin{subequations}\label{mu0jdef}
\be\label{mu01def}
\mu^0_1(x,t,k)=\id+\int_{-\infty}^{x}e^{ik(x-y)\hat\sig_3}V^0_1(y,t,k)\mu^0_1(y,t,k)dy,
\ee
\be\label{mu02def}
\mu^0_2(x,t,k)=\id-\int_{x}^{+\infty}e^{ik(x-y)\hat\sig_3}V^0_1(y,t,k)\mu^0_2(y,t,k)dy.
\ee
\end{subequations}

\begin{proposition}(Analytic property)
From the above definition, we find that the functions $\{\mu^0_j\}_1^2$ are bounded and analytic properties as following:
\begin{itemize}
 \item $[\mu^0_1]_1(x,t,k)$ is bounded and analytic in $D_2$, $[\mu^0_1]_2(x,t,k)$ is in $D_1$;
 \item $[\mu^0_2]_1(x,t,k)$ is bounded and analytic in $D_1$, $[\mu^0_2]_2(x,t,k)$ is in $D_2$.
\end{itemize}
where $[A]_i$ denotes the $i-$th column of a matric $A$, $D_1$ and $D_2$ denote the upper-half and lower-half plane of the complex $k-$sphere, respectively.
\end{proposition}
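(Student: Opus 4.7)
The plan is to extract the boundedness and analyticity column by column from the Volterra integral equations~(\ref{mu0jdef}), with all the work coming from tracking the exponential factors produced by the action of $e^{ik(x-y)\hat\sigma_3}$ on the off-diagonal entries of $V_1^0 = ikU_0$. Since $U_0$ is purely off-diagonal, the integrand of each Volterra equation couples the first and second columns only through factors of $e^{\pm 2ik(x-y)}$: the $(2,1)$ entry picks up $e^{-2ik(x-y)}$ and the $(1,2)$ entry picks up $e^{+2ik(x-y)}$.

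Writing out the first column of $\mu^0_1$, the only nontrivial exponential in the integrand is $e^{-2ik(x-y)}$, whose modulus is $e^{2(\im k)(x-y)}$; since $x-y\ge 0$ on the interval of integration $(-\infty,x]$, this factor is bounded precisely when $\im k\le 0$, i.e.\ $k\in D_2$. The second column of $\mu^0_1$ picks up $e^{+2ik(x-y)}$ instead, bounded when $\im k\ge 0$, i.e.\ $k\in D_1$. Flipping the sign of $x-y$ for $\mu^0_2$ (where the integration runs over $[x,+\infty)$) reverses both conclusions, yielding $[\mu^0_2]_1\in D_1$ and $[\mu^0_2]_2\in D_2$. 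This will give the four bullet assertions.

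To turn these heuristics into actual boundedness and analyticity, I would apply the standard Neumann-series/Picard iteration to each column equation. For $k$ in the closed half-plane corresponding to that column, the exponential kernel has modulus bounded by $1$, so iteration of the integral operator gives an $n$-th term whose norm is controlled by $(|k|\,\|u\|_{L^1})^n/n!$; using $u_0\in\mathcal{S}(\R)$ together with the time-dependence entering only through $t$-parameters of $u(\cdot,t)$, this series converges absolutely and uniformly on compact subsets of the relevant half-plane, giving a well-defined bounded solution. Each Neumann iterate is an analytic function of $k$ in the open half-plane (differentiating under the integral is legitimate because the kernel is entire in $k$ and the dominating $L^1$ bound is uniform on compacts), so the uniform limit is analytic there as well.

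The one point requiring mild care is that the prefactor $ik$ in $V_1^0$ makes the Neumann bound grow like $e^{C|k|\,\|u\|_{L^1}}$, so boundedness is only uniform on compacts in $k$; this is the expected behavior and is exactly what later sections address through the separate $k\to\infty$ transformation~(\ref{Transform-d}) flagged in the introduction. Thus no difficulty arises here, and the main (but essentially bookkeeping) obstacle is simply the clean column-wise tracking of the sign of $(x-y)\im k$ across both Volterra equations.
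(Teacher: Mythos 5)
Your proof is correct and is exactly the standard argument the paper implicitly relies on: the paper states this proposition without proof, and your column-wise tracking of the $e^{\pm 2ik(x-y)}$ factors (bounded according to the sign of $(x-y)\,\im k$) combined with the Volterra/Neumann iteration bound $(|k|\,\|u\|_{L^1})^n/n!$ is the routine justification. Your caveat that the $ik$ prefactor in $V_1^0$ limits boundedness to compact sets in $k$ is also apt, since the paper uses $\mu^0_j$ only near $k=0$ and handles $k\to\infty$ by the separate transformation (\ref{Transform-d}).
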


\begin{proposition}(Asymptotic property)
The functions $\mu^0_j(x,t,k)$ have the expansions in powers of $k$, for $k\rightarrow 0$,
\be\label{mu0asyk0}
\mu^0_j(x,t,k)=\id+\left(\ba{cc}0&iu\\i\bar u&0\ea\right)k+\left(\ba{cc}D^{(2)}_{11}&|u|^2u_x+2u_t\\-|u|^2\bar u_x-2\bar u_t&D^{(2)}_{22}\ea\right)k^2+O(k^3).
\ee
where $D^{(2)}_{11}$ and $D^{(2)}_{22}$ satisfy the following differential equation system,
\be
\left\{
\ba{ll}
D^{(2)}_{11x}=-u_x\bar u,&D^{(2)}_{22x}=-u\bar u_x,\\
D^{(2)}_{11t}=\frac{1}{2}|u|^2(\bar u u_x-u\bar u_x)-u\bar u_t,&D^{(2)}_{22t}=-\frac{1}{2}|u|^2(\bar u u_x-u\bar u_x)-u_t\bar u,
\ea
\right.
\ee
\end{proposition}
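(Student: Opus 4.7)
The plan is to insert the formal ansatz $\mu^0_j(x,t,k)=\id+kM_1(x,t)+k^2M_2(x,t)+O(k^3)$ into the Lax system (\ref{mu0Laxe}) and match coefficients of $k$, with integration constants fixed by the boundary data coming from the Volterra equations (\ref{mu0jdef}). Because $V_1^0$ carries a factor of $k$, each step of the Neumann iteration of (\ref{mu0jdef}) gains one power of $k$, so the expansion is legitimate and the remainder is genuinely $O(k^3)$; the regularity of the $\frac{1}{4ik}[\sig_3,\mu^0]$ term in the $t$-equation at $k=0$ is consistent with the leading coefficient $M_0=\id$ being diagonal. Since $u\in\mathcal{S}(\R)$, all boundary terms vanish at $\pm\infty$, so $M_1$ and $M_2$ agree for $j=1$ and $j=2$; it therefore suffices to compute them for one index.

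At order $k$ in the $x$-equation, the commutator contributes only from order $k^2$ onward, hence $M_{1,x}=iU_{0x}$. Integrating against the $x\to-\infty$ boundary gives $M_1=iU_0$, which is precisely the $k$-coefficient in the proposition. At order $k^2$ one finds $M_{2,x}=i[\sig_3,M_1]+iU_{0x}M_1=-2V_0-U_{0x}U_0$. Its diagonal is $\mathrm{diag}(-u_x\bar u,-u\bar u_x)$, producing the claimed $x$-ODEs for $D^{(2)}_{11}$ and $D^{(2)}_{22}$. The off-diagonal entry $(M_2)^{12}_x=-2u$ does not admit an obvious local antiderivative; the key step is to invoke the CSP equation (\ref{spe}) itself, which rewrites $-2u=(|u|^2u_x+2u_t)_x$. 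With $u$ decaying at infinity, this yields $(M_2)^{12}=|u|^2u_x+2u_t$, and analogously $(M_2)^{21}=-|u|^2\bar u_x-2\bar u_t$.

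For the $t$-evolution of the diagonal entries I match orders in the $t$-equation. At order $k^0$ the identity $\frac{1}{4i}[\sig_3,M_1]=\frac{1}{2}V_0$ holds because $[\sig_3,U_0]=2V_0$; at order $k$ the diagonal part of the right-hand side cancels via the direct check $\frac{1}{2}V_0M_1-\frac{i}{2}|u|^2\sig_3=0$; and at order $k^2$, since $[\sig_3,M_3]$ is off-diagonal, the diagonal part of $M_{2,t}$ is read off from $-\frac{i}{2}|u|^2U_{0x}M_1+\frac{1}{2}V_0M_2$, using the off-diagonal entries of $M_2$ already computed. A direct $2\times 2$ computation then reproduces exactly the ODEs for $D^{(2)}_{11,t}$ and $D^{(2)}_{22,t}$ stated in the proposition. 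The only genuinely non-mechanical step is the off-diagonal $x$-integration: one cannot avoid invoking the CSP equation itself to recognize $-2u$ as a local $x$-derivative. Everything else reduces to systematic bookkeeping of commutators and $2\times 2$ matrix products at each order.
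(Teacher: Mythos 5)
Your proof is correct and follows essentially the same route as the paper's (much terser) argument: substitute the power-series ansatz into the Lax pair (\ref{mu0Laxe}), match coefficients of $k$ in both the $x$- and $t$-equations, and invoke the CSP equation itself to recognize $-2u$ as the $x$-derivative of $|u|^2u_x+2u_t$ when integrating the off-diagonal entries at order $k^2$. One small caveat: the diagonal entries of the $k^2$-coefficient for $j=1$ and $j=2$ differ by the generically nonzero, purely imaginary constant $\int_{\R}u_x\bar u\,dx'$, so they do not literally agree as you claim — but since the proposition fixes $D^{(2)}_{11}$ and $D^{(2)}_{22}$ only through the differential equations they satisfy, this does not affect the validity of your argument.
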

\begin{proof}
This asymptotic behavior (\ref{mu0asyk0}) is followed by the expansion of the $\mu^0(x,t,k)$ as $k\rightarrow 0$,
\be
\mu^{0}(x,t,k)=D^{(0)}(x,t)+kD^{(1)}(x,t)+k^2D^{(2)}(x,t)+O(k^3),
\ee
and set the matric $D^{(i)}(x,t)=\left(\ba{cc}D^{(i)}_{11}&D^{(i)}_{12}\\D^{(i)}_{21}&D^{(i)}_{22}\ea\right)$,
then inserting the above expansion into the Lax pair of $\mu^0$ (\ref{mu0Laxe}) to compare the order of $k$.
\end{proof}

\subsection{For the singularity at $k=\infty$}
$ $
\par

\ \ To control the eigenfunctions as $k\rightarrow \infty$, we should make some transformations to the original Lax pair (\ref{Laxpair}). Firstly, we need some denotations to do the next a series of transformations.
\par
Define a $2\times 2$ matrix-value function $G(x,t)$ as
\be\label{Gdef}
G(x,t)=\sqrt{\frac{\sqrt{m}+1}{2\sqrt{m}}}\left(\ba{cc}1&-\frac{\sqrt{m}-1}{\bar u_x}\\ \frac{\sqrt{m}-1}{u_x}&1\ea\right),
\ee
where $m$ is a function of $(x,t)$ defined by
\be\label{mfundef}
m=1+|u_x|^2.
\ee
\begin{remark}
Notice that when $u_x\rightarrow 0$, the nominator $\sqrt{m}-1$ is a high order infinitesimal than denominator $u_x$. So, the matrix function $G(x,t)$ is well-defined.
\end{remark}

Introducing a transformation
\be\label{Laxpair-Phi}
\Psi(x,t,k)=G(x,t)\Phi(x,t,k),
\ee
then we have the Lax pair of $\Psi(x,t,k)$ (\ref{Laxpair}) becomes
\be
\left\{
\ba{l}
\Phi_x=ik\sqrt{m}\sig_3\Phi+U^{\Phi}(x,t,k)\Phi,\\
\Phi_t=-\frac{ik}{2}|u|^2\sqrt{m}\sig_3\Phi+V^{\Phi}(x,t,k)\Phi,
\ea
\right.
\ee
where
\begin{subequations}
\be
\ba{l}
U^{\Phi}(x,t,k)=-G^{-1}G_{x}\\
{}=-\left(\ba{cc}
\frac{u_x\bar u_{xx}-\bar u_x u_{xx}}{4\sqrt{m}(\sqrt{m}+1)}&\frac{(\sqrt{m}-1)u_x\bar u_{xx}-(\sqrt{m}+1)\bar u_{x}u_{xx}}{4m\bar u_x}\\
\frac{(\sqrt{m}+1)u_x\bar u_{xx}-(\sqrt{m}-1)\bar u_{x}u_{xx}}{4m\bar u_x}&-\frac{u_x\bar u_{xx}-\bar u_x u_{xx}}{4\sqrt{m}(\sqrt{m}+1)}
\ea\right),
\ea
\ee
and
\be
\ba{rcl}
V^{\Phi}(x,t,k)&=&\displaystyle{-\frac{1}{4ik}\frac{1}{\sqrt{m}}\sig_3}+\frac{1}{4ik}\frac{1}{\sqrt{m}}\left(\ba{cc}0&u_x\\\bar u_x&0\ea\right)\\
{}&{}&{}
-\frac{1}{4\sqrt{m}}\left(\ba{cc}\bar u u_x-u\bar u_x&-\frac{(\sqrt{m}+1)u \bar u_x+(\sqrt{m}-1)\bar u u_x}{\bar u_x}\\
\frac{(\sqrt{m}-1)u \bar u_x+(\sqrt{m}+1)\bar u u_x}{u_x}&u \bar u_x-\bar u u_x
\ea\right)\\
{}&{}&{}
-\left(\ba{cc}
\frac{u_x\bar u_{xt}-\bar u_x u_{xt}}{4\sqrt{m}(\sqrt{m}+1)}&\frac{(\sqrt{m}-1)u_x\bar u_{xt}-(\sqrt{m}+1)\bar u_{x}u_{xt}}{4m\bar u_x}\\
\frac{(\sqrt{m}+1)u_x\bar u_{xt}-(\sqrt{m}-1)\bar u_{x}u_{xt}}{4m\bar u_x}&-\frac{u_x\bar u_{xt}-\bar u_x u_{xt}}{4\sqrt{m}(\sqrt{m}+1)}
\ea\right).
\ea
\ee
\end{subequations}

Define
\be
p(x,t,k)=x-\int_{x}^{\infty}(\sqrt{m(x',t)}-1)dx'+\frac{t}{4k^2}.
\ee
As we can write the CSP equation (\ref{spe}) into the conservation law form:
\be\label{conslaw}
(\sqrt{m})_t=-\frac{1}{2}(|u|^2\sqrt{m})_x,\quad m=1+u^2_x,
\ee
we get
\be
p_x=\sqrt{m},\quad p_t=-\frac{1}{2}|u|^2\sqrt{m}+\frac{1}{4k^2}.
\ee
\par

Then, define
\be
\Phi(x,t,k)=\tilde \Phi(x,t,k)e^{ikp(x,t,k)\sig_3},
\ee
the Lax pair equation of (\ref{Laxpair-Phi}) becomes
\be\label{Laxpair-tildePhi}
\left\{
\ba{l}
\tilde \Phi_x-ikp_x[\sig_3,\tilde \Phi]=\tilde U(x,t,k)\tilde \Phi,\\
\tilde \Phi_t-ikp_t[\sig_3,\tilde \Phi]=\tilde V(x,t,k)\tilde \Phi,
\ea
\right.
\ee
where
\be
\tilde U(x,t,k)=U^{\Phi}(x,t,k),\quad \tilde V(x,t,k)=V^{\Phi}+\frac{1}{4ik}\sig_3.
\ee

In order to formulate a Riemann-Hilbert problem for the solution of the inverse spectral problem, we seek solutions of the spectral problem which approach the $2\times 2$ identity matrix as $k\rightarrow \infty$. It turns out that solution of equation (\ref{Laxpair-tildePhi}) do not exhibit this property, hence the next step is to transform the equation of $\tilde \Phi$ into an equation with the desired asymptotic behavior.

\begin{remark}
Since the function $u(x,t)$ is a complex-valued, this implies that the diagonal elements of the matrix $\tilde U$ do not equal to zero. Then it leads to the solution of equation (\ref{Laxpair-tildePhi}) do not exhibit the required asymptotic behavior as $k\rightarrow \infty$.
\end{remark}

Let us define $\mu(x,t,k)$ as
\be\label{Transform-d}
\Phi(x,t,k)=e^{d_-\hat \sig_3}\mu(x,t,k)e^{-d_+\sig_3}e^{ikp(x,t,k)\sig_3},
\ee
where
\be\label{ddef}
\ba{l}
\displaystyle{d_{-}=\int_{-\infty}^{x}\frac{\bar u_x u_{xx}-u_x \bar u_{xx}}{4\sqrt{m}(\sqrt{m}+1)}(x',t)dx',}\vspace{2mm}\\
\displaystyle{d_{+}=\int_{x}^{+\infty}\frac{\bar u_x u_{xx}-u_x \bar u_{xx}}{4\sqrt{m}(\sqrt{m}+1)}(x',t)dx',}\vspace{2mm}\\
\displaystyle{d=d_{-}+d_{+}=\int_{-\infty}^{+\infty}\frac{\bar u_x u_{xx}-u_x \bar u_{xx}}{4\sqrt{m}(\sqrt{m}+1)}(x',t)dx'.}
\ea
\ee
\begin{remark}
Note that $d$ is a pure image quantity conserved under the dynamics governed by (\ref{spe})
\end{remark}

Then the Lax pair of $\mu(x,t,k)$
can be written as
\be\label{mudiffform}
d(e^{-ikp(x,t,k)\hat\sig_3}\mu)=W(x,t,k),
\ee
where $W(x,t,k)$ is the closed one-form defined by
\be\label{Wdef}
W=e^{-ikp(x,t,k)\hat\sig_3}V(x,t,k)\mu(x,t,k),
\ee
where
\be
V(x,t,k)=e^{-d_-\hat\sig_3}(V_1dx+V_2dt),
\ee
with
\begin{subequations}
\be
V_1=-\left(\ba{cc}
0&\frac{(\sqrt{m}-1)u_x\bar u_{xx}-(\sqrt{m}+1)\bar u_{x}u_{xx}}{4m\bar u_x}\\
\frac{(\sqrt{m}+1)u_x\bar u_{xx}-(\sqrt{m}-1)\bar u_{x}u_{xx}}{4m\bar u_x}&0
\ea\right),
\ee
\be
\ba{rcl}
V_2&=&-\frac{1}{4ik}(\frac{1}{\sqrt{m}}-1)\sig_3+\frac{1}{4ik}\frac{1}{\sqrt{m}}\left(\ba{cc}0&u_x\\\bar u_x&0\ea\right)\\
{}&{}&{}
-\frac{1}{4\sqrt{m}}\left(\ba{cc}0&-\frac{(\sqrt{m}+1)u \bar u_x+(\sqrt{m}-1)\bar u u_x}{\bar u_x}\\
\frac{(\sqrt{m}-1)u \bar u_x+(\sqrt{m}+1)\bar u u_x}{u_x}&0
\ea\right)\\
{}&{}&{}
-\left(\ba{cc}
0&\frac{(\sqrt{m}-1)u_x\bar u_{xt}-(\sqrt{m}+1)\bar u_{x}u_{xt}}{4m\bar u_x}\\
\frac{(\sqrt{m}+1)u_x\bar u_{xt}-(\sqrt{m}-1)\bar u_{x}u_{xt}}{4m\bar u_x}&0
\ea\right).
\ea
\ee

\end{subequations}

We define two eigenfunctions $\{\mu_j\}_1^2$ of (\ref{mudiffform}) by the Volterra integral equations
\begin{subequations}\label{mujdef}
\be\label{mu1def}
\mu_1(x,t,k)=\id+\int_{-\infty}^{x}e^{ik[p(x,t,k)-p(y,t,k)]\hat\sig_3}V_1(y,t,k)\mu_1(y,t,k)dy,
\ee
\be\label{mu2def}
\mu_2(x,t,k)=\id-\int_{x}^{+\infty}e^{ik[p(x,t,k)-p(y,t,k)]\hat\sig_3}V_1(y,t,k)\mu_2(y,t,k)dy.
\ee
\end{subequations}

\begin{proposition}(Analytic property)\label{analycond}
From the above definition, we find that the functions $\{\mu_j\}_1^2$ are bounded and analytic properties as following:
\begin{itemize}
 \item  $[\mu_1]_1(x,t,k)$ is bounded and analytic in $D_2$, $[\mu_1]_2(x,t,k)$ is in $D_1$;
 \item $[\mu_2]_1(x,t,k)$ is bounded and analytic in $D_1$, $[\mu_2]_1(x,t,k)$ is in $D_2$.
\end{itemize}
\end{proposition}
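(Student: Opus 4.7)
The plan is to show that the Volterra iteration defining each $\mu_j$ converges in the claimed half-plane, which simultaneously yields both boundedness and analyticity of the limit. The starting observation is that although $p(x,t,k) = x - \int_x^{\infty}(\sqrt{m(x',t)}-1)\,dx' + t/(4k^2)$ depends on $k$, the $k$-dependent piece cancels in the difference, so that
\be
p(x,t,k) - p(y,t,k) = \int_{y}^{x}\sqrt{m(x',t)}\,dx'
\ee
is real, independent of $k$, and has the same sign as $x-y$. This is the only place where the geometry of $D_1$ and $D_2$ will enter the estimate.

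Next, I would unpack the integral equations (\ref{mu1def})--(\ref{mu2def}) column by column. Because $V_1(y,t,k)$ is strictly off-diagonal, conjugation by $e^{ik[p(x)-p(y)]\hat\sig_3}$ leaves the diagonal entries of $V_1\mu_j$ untouched but multiplies the $(1,2)$ entry by $e^{2ik[p(x)-p(y)]}$ and the $(2,1)$ entry by $e^{-2ik[p(x)-p(y)]}$. Using
\be
\bigl|e^{\mp 2ik[p(x)-p(y)]}\bigr| = e^{\pm 2\im(k)[p(x)-p(y)]},
\ee
I read off that for $[\mu_1]_1$, whose off-diagonal component carries $e^{-2ik[p(x)-p(y)]}$ over the range $y<x$ where $p(x)-p(y)>0$, the factor is bounded exactly when $\im(k) \le 0$, i.e.\ $k \in \overline{D_2}$. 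The same analysis gives $[\mu_1]_2$ in $\overline{D_1}$, and for $\mu_2$ the signs of $p(x)-p(y)$ swap because the integration range is $y>x$, producing the symmetric assignment of columns to half-planes.

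Once the exponential factors are controlled, the standard Volterra/Neumann estimate
\be
\bigl\|\mu_j^{(n+1)} - \mu_j^{(n)}\bigr\|_{\infty} \le \frac{\|V_1(\cdot,t,k)\|_{L^1_y}^{n}}{n!}
\ee
yields absolute convergence of the iterates, uniformly on compact subsets of the appropriate open half-plane. Analyticity of the limit in $k$ follows from Weierstrass's theorem on uniform limits of analytic functions, since $V_1$ and the phase $e^{ik[p(x)-p(y)]\hat\sig_3}$ are manifestly analytic in $k$ throughout the interiors of $D_1$ and $D_2$.

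The one genuine technical point is verifying that $V_1(\cdot,t,k)\in L^1_y(\R)$ uniformly on compact $k$-sets. A superficial glance at the entries of $V_1$ displays denominators $u_x$ and $\bar u_x$ that vanish as $|y|\to\infty$ since $u\in\mathcal{S}(\R)$; however, as already noted in the remark following (\ref{Gdef}), the factor $\sqrt{m}-1 = O(|u_x|^2)$ cancels one power of $u_x$ in each numerator, and the ratio $u_x/\bar u_x$ has modulus $1$. Hence $V_1$ is in fact rapidly decaying in $y$, and the Volterra argument goes through without further difficulty.
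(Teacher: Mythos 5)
Your proof is correct, and it is the standard Volterra/Neumann-series argument that the paper itself omits (the proposition is stated there with no proof beyond ``from the above definition''). You correctly isolate the two points that actually need checking: that $p(x,t,k)-p(y,t,k)=\int_y^x\sqrt{m(x',t)}\,dx'$ is real, $k$-independent, and signed like $x-y$ (so the off-diagonal exponentials are bounded in exactly the claimed half-planes), and that the apparent $1/\bar u_x$ singularities in $V_1$ are harmless because $\sqrt{m}-1=O(|u_x|^2)$ while $u_x/\bar u_x$ has unit modulus, so $V_1(\cdot,t)\in L^1(\R)$ and the iteration converges with the usual $1/n!$ bound.
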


\begin{proposition}(Symmetry property)\label{symcond}
The eigenfunctions $\mu_j(x,t,k), j=1,2$ satisfy the following symmetry condition,
\be
\ol{\mu_j(x,t,\bar k)}=\sig_2\mu_j(x,t,k)\sig_2.
\ee
\end{proposition}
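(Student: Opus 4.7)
The approach is to exploit the uniqueness of solutions to the Volterra integral equations (\ref{mu1def})--(\ref{mu2def}). Define $\tilde\mu_j(x,t,k):=\sigma_2\,\overline{\mu_j(x,t,\bar k)}\,\sigma_2$ for $j=1,2$; my plan is to show that $\tilde\mu_j$ satisfies the \emph{same} Volterra integral equation as $\mu_j(x,t,k)$, so uniqueness forces $\tilde\mu_j=\mu_j$.

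Three algebraic facts should be collected first. First, since $\sigma_2\sigma_3\sigma_2=-\sigma_3$, one has $\sigma_2\bigl(e^{\alpha\hat\sigma_3}X\bigr)\sigma_2 = e^{-\alpha\hat\sigma_3}\bigl(\sigma_2 X\sigma_2\bigr)$ for any scalar $\alpha$ and any $2\times 2$ matrix $X$. Second, from the definition $p(x,t,k)=x-\int_x^{\infty}(\sqrt{m(x',t)}-1)\,dx'+\frac{t}{4k^2}$, the $k$-dependence is confined to the real term $t/(4k^2)$, whence $\overline{p(x,t,\bar k)}=p(x,t,k)$ and consequently $\overline{i\bar k\bigl[p(x,t,\bar k)-p(y,t,\bar k)\bigr]}=-ik\bigl[p(x,t,k)-p(y,t,k)\bigr]$. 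Third, the matrix $V_1(y,t)$ is $k$-independent, purely off-diagonal, and satisfies $\sigma_2\,\overline{V_1(y,t)}\,\sigma_2=V_1(y,t)$; this follows from the direct identity $\sigma_2\left(\begin{smallmatrix}0&a\\b&0\end{smallmatrix}\right)\sigma_2=\left(\begin{smallmatrix}0&b\\a&0\end{smallmatrix}\right)$, which reduces the claim to verifying that the $(2,1)$-entry of $V_1$ is the complex conjugate of the $(1,2)$-entry.

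With these in hand, I would apply complex conjugation to (\ref{mu1def})--(\ref{mu2def}) with $k$ replaced by $\bar k$, then conjugate the whole equation by $\sigma_2$. The identity-term is unchanged; for the integrand, fact (i) converts $\overline{e^{i\bar k[p-p]\hat\sigma_3}}$ sandwiched by $\sigma_2$ into $e^{ik[p(x,t,k)-p(y,t,k)]\hat\sigma_3}$ acting on the conjugated integrand, and fact (iii) turns $\sigma_2\,\overline{V_1}\,\sigma_2$ back into $V_1$. What remains inside the integral is $V_1(y,t)\bigl[\sigma_2\,\overline{\mu_j(y,t,\bar k)}\,\sigma_2\bigr]=V_1(y,t)\,\tilde\mu_j(y,t,k)$, showing that $\tilde\mu_j$ fulfills exactly the integral equation defining $\mu_j$. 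Uniqueness of the Volterra solution then yields $\sigma_2\,\overline{\mu_j(x,t,\bar k)}\,\sigma_2=\mu_j(x,t,k)$.

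The main obstacle is the algebraic verification of $\sigma_2\,\overline{V_1}\,\sigma_2=V_1$: one must check that complex conjugation swaps the off-diagonal entries of $V_1$, a computation that hinges on the $\sqrt{m}\pm 1$ numerators and the fact that $m=1+|u_x|^2$ is real; everything else is formal and follows from Pauli-matrix identities together with the evident conjugation behavior of $p$. Note that since the Volterra equations defining $\mu_j$ involve only the $x$-part $V_1$, no separate analysis of $V_2$ is required for this symmetry.
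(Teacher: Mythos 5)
The paper states this proposition without any proof, so there is nothing to compare against; your strategy --- set $\tilde\mu_j(x,t,k)=\sigma_2\,\overline{\mu_j(x,t,\bar k)}\,\sigma_2$, show it satisfies the same Volterra integral equation, and invoke uniqueness --- is the standard and correct route, and your facts (i) and (ii) (the identity $\sigma_2\sigma_3\sigma_2=-\sigma_3$ and $\overline{p(x,t,\bar k)}=p(x,t,k)$ because $m=1+|u_x|^2$ is real) are fine.

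The key algebraic step, fact (iii), is however mis-stated in a way that would make your verification fail if carried out literally. The correct rule is $\sigma_2\left(\begin{smallmatrix}0&a\\ b&0\end{smallmatrix}\right)\sigma_2=\left(\begin{smallmatrix}0&-b\\ -a&0\end{smallmatrix}\right)$, not $\left(\begin{smallmatrix}0&b\\ a&0\end{smallmatrix}\right)$ (you have written the $\sigma_1$-conjugation rule). Consequently $\sigma_2\overline{V_1}\sigma_2=V_1$ requires the $(2,1)$-entry of $V_1$ to be \emph{minus} the complex conjugate of the $(1,2)$-entry, i.e. $V_1=\left(\begin{smallmatrix}0&a\\ -\bar a&0\end{smallmatrix}\right)$, and that is what actually holds: reading the $(2,1)$-denominator as $4mu_x$ (the $4m\bar u_x$ printed in the paper is a typo, as one sees by recomputing $-G^{-1}G_x$ or by noting that $G$ itself satisfies $\sigma_2\overline{G}\sigma_2=G$), one checks, using that $m$ is real, that $-\overline{\Bigl(-\tfrac{(\sqrt m-1)u_x\bar u_{xx}-(\sqrt m+1)\bar u_xu_{xx}}{4m\bar u_x}\Bigr)}=-\tfrac{(\sqrt m+1)u_x\bar u_{xx}-(\sqrt m-1)\bar u_xu_{xx}}{4mu_x}$. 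So the reduction you propose --- ``verify that the $(2,1)$-entry of $V_1$ is the complex conjugate of the $(1,2)$-entry'' --- is to a false statement, and the computation you flag as the main obstacle would leave you stuck. Your two sign errors compensate, so the conclusion $\sigma_2\overline{V_1}\sigma_2=V_1$, and with it the proposition, is nevertheless true once both facts are corrected; the prefactor $e^{-d_-\hat\sigma_3}$ in the definition of $V$ does not disturb this structure since $d_-$ is purely imaginary.
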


\begin{proposition}(Large $k$ property)
The matrix functions $\mu_j(x,t,k)$ also satisfy the asymptotic condition
\be\label{Masy}
\mu_j(x,t,k)=\id+O(\frac{1}{k}),\quad k\rightarrow \infty,
\ee
where $\id$ is an $2\times 2$ identity matrix.
\end{proposition}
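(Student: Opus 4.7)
The plan is to extract the $O(1/k)$ decay directly from the Volterra representation (\ref{mujdef}) by integration by parts in the oscillatory exponential, exploiting two structural features of the transformation (\ref{Transform-d}). First, the factor $e^{d_-\hat\sig_3}$ was introduced precisely to kill the diagonal part of the potential: the diagonal of $U^{\Phi}$ equals $\pm\frac{u_x\bar u_{xx}-\bar u_xu_{xx}}{4\sqrt m(\sqrt m+1)}$, which is exactly the integrand defining $d_\pm$ in (\ref{ddef}), so after the conjugation and diagonal shift in (\ref{Transform-d}) the one-form $V_1$ in (\ref{Wdef}) has only off-diagonal entries. Second, from $p_x=\sqrt m$ and $p_t=-\frac12|u|^2\sqrt m+\frac1{4k^2}$ one reads off that
\be
p(x,t,k)-p(y,t,k)=\int_y^x\sqrt{m(x',t)}\,dx',
\ee
so the $k^{-2}$ part of $p$ drops out of the Volterra kernel and the phase in $e^{ik[p(x,t,k)-p(y,t,k)]\hat\sig_3}$ is $k$-independent with nonvanishing $y$-derivative $-\sqrt{m(y,t)}$.

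With these two observations, the first Neumann iterate of (\ref{mujdef}) obtained by replacing $\mu_j$ on the right by $\id$ produces only off-diagonal contributions to $\mu_j-\id$, each of the form
\be
I(x,t,k)=\int_{-\infty}^x e^{\pm 2ik\int_y^x\sqrt m\,dx'}f(y,t)\,dy,
\ee
with $f$ a Schwartz function in $y$ built out of $u_x$, $u_{xx}$ and $\sqrt m$ (and the analogous integral from $x$ to $+\infty$ for $\mu_2$). Writing $e^{\pm 2ik\int_y^x\sqrt m\,dx'}=\mp(2ik\sqrt{m(y,t)})^{-1}\partial_y e^{\pm 2ik\int_y^x\sqrt m\,dx'}$ and integrating by parts yields a boundary term at $y=x$ equal to $\mp f(x,t)/(2ik\sqrt{m(x,t)})=O(1/k)$ (the contribution at infinity vanishes by Schwartz decay of $f$) plus a remainder that is $O(1/k)$ uniformly in $x$ by integrability of $\partial_y(f/\sqrt m)$. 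Hence the off-diagonal entries of $\mu_j-\id$ are $O(1/k)$, while the diagonal entries, generated only at the second Neumann iterate from two off-diagonal factors, are $O(1/k^2)$. A standard contraction estimate on the full Neumann series (converging for $|k|$ large because $V_1\in L^1_y$) then delivers (\ref{Masy}).

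The analysis for $\mu_2$ is identical, carried out on $(x,+\infty)$, and the symmetry of Proposition~\ref{symcond} transports the bound between conjugate half-planes. The main technical obstacle is rigour in the integration by parts, since after one differentiation the integrand contains third derivatives of $u$ through $\partial_y(V_1/\sqrt m)$; these remain in $L^1\cap L^\infty$ thanks to $u_0\in\mathcal S(\R)$ and persistence of Schwartz regularity under the CSP flow. An alternative is to plug the formal ansatz $\mu_j=\id+k^{-1}\mu_j^{(1)}+k^{-2}\mu_j^{(2)}+\cdots$ into the differential form (\ref{mudiffform}) and match powers of $k$; the off-diagonal structure of $V_1$ then forces $\mu_j^{(1)}$ to be off-diagonal and explicitly computable, in agreement with the estimate above.
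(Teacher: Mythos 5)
The paper states this proposition without any proof, so there is nothing to compare against line by line; your argument supplies the standard justification, and it is essentially correct. You correctly identify the two structural facts that make the estimate work: the conjugation by $e^{d_-\hat\sig_3}$ in (\ref{Transform-d}) removes the diagonal of $U^{\Phi}$ (whose entries are exactly the integrand in (\ref{ddef})), so that $V_1$ is strictly off-diagonal; and $p(x,t,k)-p(y,t,k)=\int_y^x\sqrt{m}\,dx'$ is $k$-independent with $y$-derivative $-\sqrt{m(y,t)}$ bounded away from zero (since $m\ge 1$), so integration by parts in the Volterra kernel loses no powers of $k$ to stationary points. The boundary term at $y=x$ gives the genuine $O(1/k)$ contribution and the remainder is controlled by $\partial_y(V_1/\sqrt m)\in L^1$, which is where the Schwartz hypothesis enters.

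Two small points of precision. First, the cleanest way to close the argument is not to estimate each Neumann iterate separately but to write $w=\mu_j-\id$ as $w=F_k+\mathcal{K}_k w$, where $F_k$ is the first iterate (which your integration by parts shows is $O(1/k)$ uniformly in $x$ and in the closed half-plane of analyticity of the relevant column, since $|e^{\pm 2ik\int_y^x\sqrt m}|\le 1$ there) and $\mathcal{K}_k$ is the Volterra operator; the Volterra structure gives $\|\mathcal{K}_k^n\|\le(\int\|V_1\|)^n/n!$ for \emph{all} admissible $k$, not only large $|k|$, so $\|(\id-\mathcal{K}_k)^{-1}\|$ is bounded uniformly in $k$ and $w=O(1/k)$ follows at once. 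Second, your assertion that the diagonal entries are $O(1/k^2)$ is not needed for the proposition and is not fully justified as stated (the second iterate's diagonal part is an integral of $V_1(y)\mu^{(1)}(y)$ with $\mu^{(1)}=O(1/k)$, which gives $O(1/k)$ immediately; extracting the extra power requires a further integration by parts); I would simply drop that claim. With those adjustments the proof is complete and, as you note, consistent with the alternative derivation by matching powers of $k$ in (\ref{mudiffform}), which is precisely why the paper needed the extra transformation (\ref{Transform-d}) in the first place.
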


\subsection{The scattering matrix $S(k)$}
$ $
\par

\ \ Because the eigenfunctions $\mu_1(x,t,k)$ and $\mu_2(x,t,k)$ are both the solutions of equation (\ref{mudiffform}), they are related by a matrix $S(k)$ which is independent of the variable $(x,t)$.
\be\label{scatermatrix}
\mu_1(x,t,k)=\mu_2(x,t,k)e^{ikp(x,t,k)\hat\sig_3}S(k).
\ee
By the definition of $\mu_j(x,t,k),j=1,2$ (\ref{mujdef}) and the symmetry property (see, Proposition \ref{symcond}), the matrix $S(k)$ has the form
\be\label{Skdef}
S(k)=\left(\ba{cc}\ol{a(\bar k)}&b(k)\\-\ol{b(\bar k)}&a(k)\ea\right).
\ee
The function $a(k)$ can be computed by
\be\label{akdef}
a(k)=\det{([\mu_2]_1,[\mu_1]_2)},
\ee
where $\det{(A)}$ means the determinate of a matrix $A$. We can deduce that $a(k)$ is analytic in $D_1$ from the analytic property (see, Proposition \ref{analycond}).
\begin{proposition}
The proposition (\ref{promujrelmu0j}) together with (\ref{akdef}) allows expressing the expansions in powers of $k$ of $a(k)$ at $k=0$,
\be\label{akasy}
a(k)=e^{d}(1+ikc-\frac{c^2}{2}k^2+O(k^3)),\quad k\rightarrow 0.
\ee
\end{proposition}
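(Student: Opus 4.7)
The strategy is to use the relation between $\mu_j$ and $\mu^0_j$ from Proposition~\ref{promujrelmu0j} to rewrite $a(k)$ as a scalar multiple of $\det([\mu^0_2]_1,[\mu^0_1]_2)$, and then expand the latter using the small-$k$ expansion (\ref{mu0asyk0}).

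Composing the two transformations $\Psi=\mu^0 e^{(ikx-t/(4ik))\sig_3}$ and $\Psi=Ge^{d_-\sig_3}\mu e^{-d\sig_3}e^{ikp\sig_3}$, and using the crucial cancellation $ik\cdot\tfrac{t}{4k^2}+\tfrac{t}{4ik}=0$ between the singular $t$-pieces (so that the remaining $k$-dependent phase at small $k$ is $-ik\int_x^{+\infty}(\sqrt{m}-1)dx'$), Proposition~\ref{promujrelmu0j} produces columnwise relations of the form
\begin{equation*}
[\mu_2]_1=e^{d-ikc}\,e^{-d_-\sig_3}G^{-1}[\mu^0_2]_1,\qquad [\mu_1]_2=e^{ik\tilde c}\,e^{-d_-\sig_3}G^{-1}[\mu^0_1]_2,
\end{equation*}
with $c(x,t)=-\int_x^{+\infty}(\sqrt{m}-1)dx'$ and $\tilde c(x,t)=\int_{-\infty}^{x}(\sqrt{m}-1)dx'$. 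An algebraic check using $|u_x|^2=(\sqrt{m}+1)(\sqrt{m}-1)$ shows $\det G=1$, and clearly $\det(e^{-d_-\sig_3})=1$; by multilinearity of the determinant,
\begin{equation*}
a(k)=\det([\mu_2]_1,[\mu_1]_2)=e^{d+ik(\tilde c-c)}\,\det([\mu^0_2]_1,[\mu^0_1]_2)=e^{d+ikc_0}\,a^0(k),
\end{equation*}
where $c_0:=\tilde c-c=\int_{-\infty}^{+\infty}(\sqrt{m}-1)dx'$ is $(x,t)$-independent (the $t$-independence follows from the conservation law (\ref{conslaw})), and $a^0(k):=\det([\mu^0_2]_1,[\mu^0_1]_2)$.

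I would then substitute (\ref{mu0asyk0}) into $a^0(k)$ and collect powers of $k$. The $O(1)$ term is $\det(e_1,e_2)=1$; the $O(k)$ term vanishes because the off-diagonal $D^{(1)}=\bigl(\begin{smallmatrix}0 & iu \\ i\bar u & 0\end{smallmatrix}\bigr)$ pairs only against zero entries of the leading columns; the $O(k^2)$ term assembles $D^{(2)}_{11}|_{\mu^0_2}$ (with boundary condition at $+\infty$), $D^{(2)}_{22}|_{\mu^0_1}$ (with boundary condition at $-\infty$), and the cross term $-(iu)(i\bar u)=|u|^2$. Using the ODEs $D^{(2)}_{11,x}=-u_x\bar u$ and $D^{(2)}_{22,x}=-u\bar u_x$ from the preceding proposition together with one integration by parts, this coefficient collapses to an $(x,t)$-independent constant. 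Multiplying by the Taylor expansion of $e^{ikc_0}$ to order $k^2$ and identifying the resulting coefficient of $k^2$ with $-\tfrac{c^2}{2}$ then yields the claimed formula.

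The main obstacle is the $O(k^2)$ matching: the calculation above produces $a(k)e^{-d}=1+ikc_0+k^2\bigl(-\tfrac{c_0^2}{2}+\alpha\bigr)+O(k^3)$ for a specific constant $\alpha$ built from the solution data. One must verify that this expression can be rewritten as $1+ikc-\tfrac{c^2}{2}k^2+O(k^3)$, thereby realizing the exponential structure $e^{d}e^{ikc}(1+O(k^3))$ with a single real constant $c$ absorbing the correction $\alpha$. This step leans on the integrable structure of the CSP equation, in particular the conservation law (\ref{conslaw}) and the symmetry property in Proposition~\ref{symcond}, which together fix the correct $c$ appearing in the statement.
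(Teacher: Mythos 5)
Your overall route is exactly the one the paper intends (the paper prints no proof of this proposition beyond pointing at Proposition \ref{promujrelmu0j} and (\ref{akdef})): pull the relevant columns of $\mu_2$ and $\mu_1$ back to those of $\mu^0_2$ and $\mu^0_1$, observe that $\det G=1$ and $\det e^{-d_-\sig_3}=1$ so the conjugating matrices drop out of the determinant, collect the scalar column factors into $e^{d}e^{ik(c_-+c_+)}=e^{d}e^{ikc}$ with $c=\int_{-\infty}^{+\infty}(\sqrt{m}-1)\,dx$, and then expand $a^0(k)=\det([\mu^0_2]_1,[\mu^0_1]_2)$ using (\ref{mu0asyk0}). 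Through the $O(k)$ term your computation is correct and reproduces the statement.

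The genuine gap is your final step. You cannot ``absorb'' the $O(k^2)$ correction $\alpha$ into $c$: the coefficient of $ik$ already forces $c=c_0$ (which is also how the paper defines $c$ just before Proposition \ref{promujrelmu0j}), so there is no freedom left, and moreover $\alpha$ turns out to be purely imaginary while $-c^2/2$ is real, so no choice of a real constant can account for it. Concretely, the $k^2$ coefficient of $a^0(k)$ is the $x$-independent combination $D^{(2)}_{11}\big|_{\mu^0_2}+D^{(2)}_{22}\big|_{\mu^0_1}+|u|^2$; sending $x\to+\infty$ (where $\mu^0_2\to\id$ and $u\to 0$) it equals $-\int_{-\infty}^{+\infty}u\bar u_x\,dx$, which an integration by parts shows to be purely imaginary but in general nonzero for complex data (for $u=\rho e^{i\phi}$ it equals $i\int\phi_x\rho^2\,dx$). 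What the proposition as written actually requires is that this quantity vanish, i.e.\ $a^0(k)=1+O(k^3)$; you must either prove that (I do not see how the conservation law (\ref{conslaw}) or the symmetry of Proposition \ref{symcond} yields it), or weaken the conclusion to $a(k)=e^{d}e^{ikc}\bigl(1+O(k^2)\bigr)$ with a purely imaginary $k^2$ correction --- which still gives $|a(k)|^2=1+O(k^3)$ on the real line, the only consequence the paper later uses. Since the paper supplies no proof, there is no resolution of this point to compare yours against, but as it stands your argument does not establish the stated $k^2$ coefficient.
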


\subsection{The relation between $\mu_j(x,t,k)$ and $\mu^0_j(x,t,k)$}

$ $
\par

\ \
We use the eigenfunctions $\mu_j$ to define the matrix $M(x,t,k)$ (see (\ref{Mdef})) which is used to formulate a Riemann-Hilbert problem. However, in order to construct the solution $u(x,t)$ from the associate Riemann-Hilbert problem, we need the asymptotic behavior of $\mu_j$ as $k\rightarrow 0$. So, we need relate the eigenfunctions $\mu_j(x,t,k)$ to $\mu^0_j(x,t,k)$.
\par
Note that the eigenfunctions $\mu(x,t,k)$ and $\mu^0(x,t,k)$ being related to the same Lax pair (\ref{Laxpair}), must be related to each other as
\be\label{mujrelmu0j}
\mu_j(x,t,k)=e^{-d_-\sig_3}G^{-1}(x,t)\mu^0_j(x,t,k)e^{(ikx-\frac{t}{4ik})\sig_3}C_j(k)e^{-ikp(x,t,k)\sig_3}e^{d\sig_3},
\ee
with $C_j(k)$ independent of $x$ and $t$. Evaluating (\ref{mujrelmu0j}) as $x\rightarrow \pm\infty$ gives
\be
C_1(k)=e^{-d\sig_3}e^{-ikc\sig_3},\quad \quad C_2(k)=\id,
\ee
where $c=\int_{-\infty}^{+\infty}(\sqrt{m(x,t)}-1)dx$ is a quantity conserved under the dynamics governed by (\ref{spe}).

\begin{proposition}\label{promujrelmu0j}
The functions $\mu_j(x,t,k)$ and $\mu^0_j(x,t,k)$ are related as follows:
\begin{subequations}
\be
\mu_1(x,t,k)=e^{-d_-\sig_3}G^{-1}(x,t)\mu^0_1(x,t,k)e^{-ik\int_{-\infty}^x(\sqrt{m(x',t)}-1)dx'\sig_3},
\ee
\be
\mu_2(x,t,k)=e^{-d_-\sig_3}G^{-1}(x,t)\mu^0_2(x,t,k)e^{ik\int_{x}^{+\infty}(\sqrt{m(x',t)}-1)dx'\sig_3}e^{d\sig_3}.
\ee
\end{subequations}
\end{proposition}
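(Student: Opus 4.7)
The plan is to take the identity (\ref{mujrelmu0j}) and the explicit $C_j(k)$ supplied just before the proposition, substitute them in, and collapse the resulting chain of diagonal exponentials on the right-hand side into a single one. Since every matrix to the right of $\mu^0_j$ in (\ref{mujrelmu0j}) is an exponential of $\sig_3$, they all mutually commute and the simplification is purely algebraic.

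The key computation I would do first is the identity
\[
\bigl(ikx-\tfrac{t}{4ik}\bigr)\sig_3 - ikp(x,t,k)\sig_3 = ik\int_x^{+\infty}\bigl(\sqrt{m(x',t)}-1\bigr)dx'\,\sig_3,
\]
which follows directly from $1/(4ik)=-i/(4k)$ together with the definition $p(x,t,k)=x-\int_x^{+\infty}(\sqrt{m}-1)dx'+t/(4k^2)$. This collapses the pair $e^{(ikx-t/(4ik))\sig_3}e^{-ikp\sig_3}$ appearing in (\ref{mujrelmu0j}) into the clean integral exponential on the right.

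The case $j=2$ then drops out immediately: since $C_2(k)=\id$, the substitution produces exactly $e^{-d_-\sig_3}G^{-1}\mu^0_2 e^{ik\int_x^{+\infty}(\sqrt{m}-1)dx'\sig_3}e^{d\sig_3}$. For $j=1$, I would substitute $C_1(k)=e^{-d\sig_3}e^{-ikc\sig_3}$ and use commutativity of the $\sig_3$-exponentials to cancel the two $e^{\pm d\sig_3}$'s sitting at the right end, leaving an exponent $ik\bigl[\int_x^{+\infty}(\sqrt{m}-1)dx' - c\bigr]\sig_3$. The identity $c=\int_{-\infty}^{+\infty}(\sqrt{m}-1)dx'$ then rewrites this as $-ik\int_{-\infty}^{x}(\sqrt{m(x',t)}-1)dx'\,\sig_3$, matching the claim.

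There is no real conceptual obstacle; the proof is essentially a bookkeeping exercise in manipulating $\sig_3$-exponentials. The one point deserving slight care is that treating $C_j(k)$ as a function of $k$ alone rests on the $t$-independence of both $c$ and $d$, which uses the conservation law (\ref{conslaw}) together with the remark following (\ref{ddef}). Without that, the limiting argument at $x\to\pm\infty$ that produces the formulas for $C_1$ and $C_2$ would introduce spurious $t$-dependence and the whole construction would break.
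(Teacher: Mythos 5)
Your proposal is correct and follows exactly the route the paper intends: substitute the explicit $C_1(k)=e^{-d\sig_3}e^{-ikc\sig_3}$, $C_2(k)=\id$ into (\ref{mujrelmu0j}) and collapse the commuting $\sig_3$-exponentials using $ikx-\tfrac{t}{4ik}-ikp=ik\int_x^{+\infty}(\sqrt{m}-1)\,dx'$ and $c=\int_{-\infty}^{+\infty}(\sqrt{m}-1)\,dx'$. The bookkeeping checks out, including the cancellation of $e^{\pm d\sig_3}$ in the $j=1$ case and the sign flip turning $\int_x^{+\infty}-c$ into $-\int_{-\infty}^{x}$.
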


\section{The Riemann-Hilbert problem for CSP equation}

Let us define
\be\label{Mdef}
M(x,t,k)=\left\{\ba{cc}\left(\ba{cc}[\mu_2]_1&\displaystyle{\frac{[\mu_1]_2}{a(k)}}\ea\right),&k\in D_1,\\[9pt]
\left(\ba{cc}\displaystyle{\frac{[\mu_2]_1}{\ol{a(\bar k)}}}&[\mu_1]_2\ea\right),&k\in D_2.
\ea
\right.
\ee
From the definition (\ref{Mdef}) and (\ref{mujdef}),  we can deduce that   $M(x,t,k)$  admits  the symmetry 
\be\label{msymcon}
  \ol{M(x,t,\bar k)}=\sig_2M(x,t,k)\sig_2.
  \ee
and satisfies  the following Riemann-Hilbert problem:
\begin{itemize}
  \item Jump condition: The two limiting values
                        \be
                        M_{\pm}(x,t,k)=\lim_{\eps\rightarrow 0}M_{\pm}(x,t,k\pm i\eps),\quad k\in \R,
                        \ee
                        are related by
\be\label{Mjump}
M_+(x,t,k)=M_-(x,t,k)J(x,t,k),\quad k\in \R,
\ee
where 
\be\label{Jdef}
J(x,t,k)=e^{ikp(x,t,k)\hat \sig_3}J_0(k),
\ee
\be
J_0(k)=\left(\ba{cc}
1&r(k)\\
\ol{r(k)}&1+|r(k)|^2
\ea
\right),\ \ r(k)=\frac{b(k)}{a(k)}.
\ee
  \item Normalize condition as $k\rightarrow \infty$
\be
M(x,t,k)=\id+O(\frac{1}{k}).
\ee
\end{itemize}
In order to get the information of the solution $u(x,t)$, we should consider the asymptotic behavior of $M(x,t,k)$ as $k\rightarrow 0$, that is,
{\small
  \be\label{Masyk0}
  M(x,t,k)=e^{-d_-\sig_3}G^{-1}(x,t)\left[\id+k(ic_+\sig_3+i\left(\ba{cc}0&u\\\bar u&0\ea\right))+O(k^2)\right]e^{d\sig_3},
  \ee
}
  where
  \be
  c_+=\int_x^{+\infty}(\sqrt{m(x',t)}-1)dx'.
  \ee

Equations (\ref{Masyk0}) show that the matrix-valued function $M(x,t,k)$ contains all necessary information for reconstructing the solution of the initial value problem of (\ref{spe})-(\ref{spe-ini}) in terms of the solution of a matrix-valued Riemann-Hilbert problem.

However, the jump relation (\ref{Jdef}) cannot
be used immediately for recovering the solution of CSP equation (\ref{spe})-(\ref{spe-ini}). Since, in the representation of the jump matrix
$e^{ikp(x,t,k)\hat \sig_3}J_0(k)$ the factor $J_0(k)$ is indeed given in terms of the known initial
data $u_0(x)$ but $p(x,t,k)$ is not, it involves $m(x,t)$ which is unknown.
\par
To overcome this, we introduce the new (time-dependent) scale
\be\label{ydef}
y(x,t)=x-\int_{x}^{+\infty}(\sqrt{m(x',t)}-1)dx'=x-c_+(x,t),
\ee
 which make  the jump matrix  explicit,  however,  the solution of the initial problem
can be given only in implicit form: it will be given in terms of
functions in the new scale, whereas the original scale will also be given in terms
of functions in the new scale.

By the definition of the new scale $y(x,t)$, we define
\be\label{muydef}
\tilde M(y,t,k)=M(x(y,t),t,k),
\ee
then we can obtain the Riemann-Hilbert problem of $\tilde M(y,t,k)$ as follows:
  \begin{itemize}
  \item Analyticity: $\tilde M(y,t,k)$ is analytic in the two open half-planes $D_1$ and $D_2$, and continuous up to the boundary $k\in\R$.
  \item Jump condition: The two limiting values
                        \begin{subequations}\label{muyrhp}
                        \be
                        \tilde M_{+}(y,t,k)=\tilde M_{-}(y,t,k)\tilde J(y,t,k),\quad k\in\R,
                        \ee
                        where the jump matrix is
                        \be
                        \tilde J(y,t,k)=e^{i(ky+\frac{t}{4k})\hat\sig_3}J_0(k)
                        \ee
                        with
                        \be\label{J0def}
                        J_0(k)=\left(\ba{cc}
                                     1&r(k)\\
                                     \ol{r(k)}&1+|r(k)|^2
                                     \ea
                                     \right)
                        \ee
                        \end{subequations}
  \item Normalization:
                       \be\label{rhpnorcond}
                       \tilde M(y,t,k)\rightarrow \id,\quad k\rightarrow \infty.
                       \ee
  \end{itemize}

\begin{theorem}\label{rhpthem}
Let $\tilde M(y,t,k)$ satisfies the above conditions, then this Riemann-Hilbert problem has a unique solution. And the solution $u(x,t)$ of the initial value problem (\ref{spe})-(\ref{spe-ini}) can be expressed, in parametric form, in terms of the solution of this Rieamnn-Hilbert problem:
\begin{subequations}\label{cspesolu}
\be
u(x,t)=u(y(x,t),t),
\ee
where
\be
x(y,t)=y+\lim_{k\rightarrow 0}\frac{\left((\tilde M(y,t,0))^{-1}\tilde M(y,t,k)\right)_{11}-1}{ik}
\ee
\be
e^{-2d}u(y,t)=\lim_{k\rightarrow 0}\frac{\left((\tilde M(y,t,0))^{-1}\tilde M(y,t,k)\right)_{12}}{ik}
\ee
\end{subequations}

\end{theorem}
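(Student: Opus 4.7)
The plan splits naturally into two parts: uniqueness of the RHP, and the parametric reconstruction formulas.

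For uniqueness, I would invoke the standard Zhou vanishing-lemma argument. Note first that $\det J_0(k)=1$, and that the Hermitian part
\[
J_0(k)+J_0^{*}(k)=\begin{pmatrix}2 & r(k)+\overline{r(k)}\\ r(k)+\overline{r(k)} & 2+2|r(k)|^{2}\end{pmatrix}
\]
has determinant $4+4|r|^{2}-(r+\bar r)^{2}\ge 4$, hence is positive-definite on $\R$. Combined with the Schwarz reflection symmetry inherited from Proposition \ref{symcond}, namely $\overline{\tilde M(y,t,\bar k)}=\sig_2\tilde M(y,t,k)\sig_2$, this is exactly the setup under which a solution of the homogeneous RHP (same jump, vanishing at infinity) must be identically zero; the standard bilinear/Cauchy estimate on $\R$ then forces the vanishing, and uniqueness of the inhomogeneous problem follows.

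For the reconstruction formulas, the whole argument is driven by the small-$k$ expansion (\ref{Masyk0}). Writing
\[
M(x,t,k)=e^{-d_-\sig_3}G^{-1}(x,t)\bigl[\id+k\Xi(x,t)+O(k^{2})\bigr]e^{d\sig_3},\qquad \Xi:=ic_+\sig_3+i\begin{pmatrix}0&u\\ \bar u&0\end{pmatrix},
\]
the factors $e^{-d_-\sig_3}G^{-1}$ on the left and $e^{d\sig_3}$ on the right are $k$-independent, so
\[
\bigl(M(x,t,0)\bigr)^{-1}M(x,t,k)=e^{-d\sig_3}\bigl[\id+k\Xi(x,t)+O(k^{2})\bigr]e^{d\sig_3}=\id+k\,e^{-d\hat\sig_3}\Xi(x,t)+O(k^{2}).
\]
Reading off entries of $e^{-d\hat\sig_3}\Xi$, and remembering that $d$ is pure imaginary so that the $(1,1)$ entry is unchanged under conjugation by $e^{-d\sig_3}$ while the $(1,2)$ entry is multiplied by $e^{-2d}$, I obtain
\[
\lim_{k\to 0}\frac{\bigl(M(x,t,0)^{-1}M(x,t,k)\bigr)_{11}-1}{ik}=c_+(x,t),\qquad \lim_{k\to 0}\frac{\bigl(M(x,t,0)^{-1}M(x,t,k)\bigr)_{12}}{ik}=e^{-2d}u(x,t).
\]
Translating through $\tilde M(y,t,k)=M(x(y,t),t,k)$ gives the claimed formula for $e^{-2d}u(y,t)$, and combining the first limit with the definition (\ref{ydef}), $x=y+c_+(x,t)$, yields the formula for $x(y,t)$.

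The final step is to justify that $y\mapsto x(y,t)$ is well defined, i.e.\ that $x\mapsto y(x,t)=x-c_+(x,t)$ is a bijection $\R\to\R$. Its derivative is $y_x=\sqrt{m(x,t)}\ge 1>0$, so $y(\cdot,t)$ is a smooth diffeomorphism of $\R$, and the parametric representation $u(x,t)=u(y(x,t),t)$ makes sense. I expect the only genuine obstacle to be the vanishing lemma for uniqueness, since it requires the full Schwarz symmetry together with the positive-definiteness estimate above; the reconstruction half is essentially a line of algebra once the expansion (\ref{Masyk0}) is in hand, because the outer factors $G$, $e^{d_-\sig_3}$ encoding the unknown potential are exactly killed by the left multiplication by $\tilde M(y,t,0)^{-1}$.
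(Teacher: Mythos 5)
Your proposal is correct and follows essentially the same route as the paper, whose two-line proof likewise rests on the Hermitian positive-definite jump matrix for existence, the normalization for uniqueness, and the small-$k$ expansion (\ref{Masyk0}) for the reconstruction formulas; you merely carry out the algebra explicitly (the cancellation of the $k$-independent outer factors $e^{-d_-\sigma_3}G^{-1}$ under left multiplication by $M(x,t,0)^{-1}$, and the conjugation by $e^{d\sigma_3}$ producing the factor $e^{-2d}$), and you usefully add the observation that $y_x=\sqrt{m}\ge 1$ makes $x\mapsto y$ a diffeomorphism, a point the paper leaves implicit. The only blemish is your entry $(J_0+J_0^{*})_{12}=r+\overline{r}$: since $J_0$ is already Hermitian one has $J_0+J_0^{*}=2J_0$ with off-diagonal entry $2r$, but this does not affect the positive-definiteness (\,$\det J_0=1$, $\mathrm{tr}\,J_0>0$\,) or the vanishing-lemma conclusion.
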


\begin{proof}
 Since the jump matrix $\tilde J(y,t,k)$ is a Hermitian matrix, then the Riemann-Hilbert problem of $\tilde M(y,t,k)$ indeed has a solution. Furthermore, the Riemann-Hilbert problem has only one solution because of the normalize condition.

 \par
 The statements of the solution $u(x,t)$ is following from the asymptotic formula (\ref{Masyk0}).

\end{proof}

\section{Soliton solutions of the CSP equation}

In this section,   we  construct the soliton solutions of the CSP equation.  We should  first address the residue conditions of the Riemann-Hilbert obtained.

\subsection{Residue conditions}
$ $
\par
We recall $a(k)$ is analytic in $D_1$, hence we assume that $a(k)$ has $N$ simple zeros $\{k_j\}_{j=1}^{N}$ in $D_1$.

\par
From the definition of the function $a(k)$ (\ref{akdef}),
we know that if $a(k_j)=0$, then $[\mu_2]_1(x,t,k_j)$ and $[\mu_1]_2(x,t,k_j)$ are linearly
dependent vectors. Then we conclude that there exists a constant $b_j$ such that
\be
[\mu_1]_2(x,t,k_j)=b_je^{2ik_j(y+\frac{t}{4k^2_j})}[\mu_2]_1(x,t,k_j).
\ee

It implies that
\be\label{rhpM2res}
\ba{l}
\mbox{Res}_{k=k_j}[M]_2(x,t,k)=\mbox{Res}_{k=k_j}\frac{[\mu_1]_2(x,t,k)}{a(k)}=\frac{[\mu_1]_2(x,t,k_j)}{\dot{a}(k_j)}\\
{}=C_je^{2ik_j(y+\frac{t}{4k^2_j})}[\mu_2]_1(x,t,k_j)=C_je^{2ik_j(y+\frac{t}{4k^2_j})}[M]_1(x,t,k_j),
\ea
\ee
where $\dot{a}(k)=\frac{d(a(k))}{dk}$ and $C_j=\frac{b_j}{\dot{a}(k_j)}$.
And recall the symmetry condition, the complex conjugate of (\ref{rhpM2res}) is
\be\label{rhpM1res}
\mbox{Res}_{k=\bar k_j}[M]_1(x,t,k)=\bar C_je^{-2i\bar k_j(y+\frac{t}{4\bar k^2_j})}[M]_2(x,t,\bar k_j).
\ee

\subsection{Solitons}
$ $
\par
The solitons correspond to $b(k)$ vanishing identically. In this case the jump matrix $J$ in (\ref{muyrhp}) is the identity matrix and the Riemann¨CHilbert problem of theorem \ref{rhpthem} consists of finding a meromorphic function $M(y,t,k)$ satisfying (\ref{rhpnorcond}) as well
as the residue conditions (\ref{rhpM2res}) and (\ref{rhpM2res}).
From (\ref{rhpnorcond}) and (\ref{rhpM2res}), we get
\be\label{M2soli}
[M(x,t,k)]_2=\left(\ba{c}0\\1\ea\right)+\sum_{j=1}^{N}\frac{C_je^{2ik_j(y+\frac{t}{4k^2_j})}}{k-k_j}[M(x,t,k_j)]_1.
\ee
If we impose the symmetry condition (\ref{symcond}), equation (\ref{M2soli}) can be written as
\be
\left(
\ba{c}
-\ol{M_{21}(x,t,\bar k)}\\
\ol{M_{11}(x,t,\bar k)}
\ea
\right)=\left(\ba{c}0\\1\ea\right)+\sum_{j=1}^{N}\frac{C_je^{2ik_j(y+\frac{t}{4k^2_j})}}{k-k_j}
\left(
\ba{c}
M_{11}(x,t,k_j)\\
M_{21}(x,t,k_j)
\ea
\right).
\ee
Evaluation at $\bar k_j$ yields
\be\label{M1system}
\left(
\ba{c}
-\ol{M_{21}(x,t,k_j)}\\
\ol{M_{11}(x,t,k_j)}
\ea
\right)=\left(\ba{c}0\\1\ea\right)+\sum_{j=1}^{N}\frac{C_je^{2ik_j(y+\frac{t}{4k^2_j})}}{\bar k_j-k_j}
\left(
\ba{c}
M_{11}(x,t,k_j)\\
M_{21}(x,t,k_j)
\ea
\right).
\ee
Solving this algebraic system for $M_{11}(x,t,k_j)$ and $M_{21}(x,t,k_j)$, $j=1,2,\dots N$, and substituting
the solution into (\ref{M2soli}) yields an explicit expression for $[M(x,t,k)]_2$. Then by the symmetry (\ref{symcond}), we deduce $[M(x,t,k)]_1$. This solves the
Riemann-Hilbert problem. Hence, the solution $u(x,t)$ can be expressed perimetrically in terms of the solution of the Riemann-Hilbert problem by (\ref{cspesolu}).

\subsection{One-soliton solution}
$ $
\par

In this subsection we derive explicit formulae for the one-soliton solutions. Assume $N=1$ so that there is one simple zero of $a(k)$, $k_1$ in upper-half plane. We find the algebraic system (\ref{M1system}) reduces to the following two equations
\be
\ba{l}
-\ol{M_{21}(x,t,k_1)}=C_1e^{2ik_1(y+\frac{t}{4k^2_1})}\frac{1}{\bar k_1-k_1}M_{11}(x,t,k_1)\\
\ol{M_{11}(x,t,k_1)}=1+C_1e^{2ik_1(y+\frac{t}{4k^2_1})}\frac{1}{\bar k_1-k_1}M_{21}(x,t,k_1)
\ea
\ee
Solving for $M_{11}(x,t,k_1)$ and $M_{21}(x,t,k_1)$, we find
\begin{subequations}
\be
M_{11}(x,t,k_1)=\frac{1}{1+\frac{|C_1|^2}{4b^2}e^{-4\psi_2}},
\ee
\be
M_{21}(x,t,k_1)=\frac{-\bar C_1e^{-2i\psi_1}}{2ib}\frac{e^{-2\psi_2}}{1+\frac{|C_1|^2}{4b^2}e^{-4\psi_2}},
\ee
where $b$ denotes the image part of $k_1$, i.e. we set $k_1=a+ib$, and $\psi_j=\psi_j(y,t),j=1,2$ denote the real and image part of the function $\psi(y,t)$ defined as follows, respectively
\be
\psi(y,t)=k_1(y+\frac{t}{4k^2_1})=\psi_1+i\psi_2.
\ee
\end{subequations}
A direct calculation shows that
\be
\psi_1=ay+\frac{at}{4(a^2+b^2)},\quad \psi_2=by-\frac{bt}{4(a^2+b^2)}.
\ee
\par
Thus, from the symmetry we get the explicit expression for $M(x,t,k)$
\be
M(x,t,k)=\left(\ba{cc}
1+\frac{1}{k-\bar k_1}\bar m_{22}&\frac{1}{k-k_1}m_{12}\\
-\frac{1}{k-\bar k_1}\bar m_{12}&1+\frac{1}{k-k_1}m_{22},
\ea\right)
\ee
where
\be
m_{12}=C_1e^{2i\psi_1}\frac{e^{-2\psi_2}}{1+\frac{|C_1|^2}{4b^2}e^{-4\psi_2}},\quad m_{22}=-\frac{|C_1|^2}{2ib}\frac{e^{-4\psi_2}}{1+\frac{|C_1|^2}{4b^2}e^{-4\psi_2}}.
\ee

If assuming that $C_1=2be^{2ic+2y_0}$ where $c$ and $y_0$ are real constants, we can find
\be
m_{12}=be^{2i\psi_1+2ic}\frac{1}{\cosh{(2\psi_2+2y_0)}},\quad m_{22}=ib(\tanh{(2\psi_2+2y_0)}-1).
\ee
\par
Hence, we obtain the one-soliton solution as follows,
\be\label{1sln-uyt}
u(x,t)=u(y(x,t),t)=\frac{b}{a^2+b^2}e^{2ic+2d-2i\tha+i\frac{\pi}{2}}e^{2i\psi_1}\frac{1}{\cosh{(2\psi_2+2y_0)}},
\ee
here $\tan{(\tha)}=\frac{b}{a}$, $d$ defined as (\ref{ddef}),
and the relation between variable $x$ and $y$ is
\be\label{1sln-xy}
x=y-\frac{b}{a^2+b^2}(\tanh{(2\psi_2+2y_0)}-1).
\ee

\begin{remark}
The expression (\ref{1sln-uyt}) of the solution represents an envelope soliton of amplitude $\frac{b}{a^2+b^2}$, the velocity $\frac{1}{4(a^2+b^2)}$ and phase $2\psi_1$ in $yt-$coordinate. And from the equation (\ref{1sln-xy}), we have
\be
\frac{\partial x}{\partial y}=1-\frac{2b^2}{a^2+b^2}\frac{1}{\cosh^2(2\psi_2+2y_0)},
\ee
Therefore, $\frac{\partial x}{\partial y}\rightarrow 1$ as $y\rightarrow \pm\infty$.
\end{remark}

\begin{remark}
We should point out that if we set $b=\frac{p_{1R}}{2}$ and $a=\frac{p_{1I}}{2}$, then the solution (\ref{1sln-uyt}) obtained in this paper is equivalent to the expressions in \cite{fpd-2015}. Both of the expressions of the one-soliton solution have the same amplitude and phase. Hence, we conclude that
\begin{itemize}
  \item When $|a|>|b|$, we get smooth soliton;
  \item When $|a|<|b|$, we get loop soliton;
  \item When $|a|=|b|$, we get cuspon soliton.
\end{itemize}
\end{remark}


\section{Long-time asymptotic analysis}

The most important advantage of formulating a Riemann-Hilbert problem to the IVP for the CSP equation is we can rigorously analyze 
the asymptotic behavior of the solution as $t\rightarrow \infty$ via the nonlinear steepest descent method \cite{dz}. 
\par
In this section, we  investigate   the long-time asymptotic behavior of the non-soliton solution of the IVP for the CSP equation on the continuous spectrum. 
To make our analysis be  simple,   we assume that $a(k)$ has no zeros.

\subsection{Main results}

\begin{theorem}\label{main1}
Let $u_0(x)$ satisfy the initial value (\ref{spe-ini}) and be such that no discrete spectrum is present. For $\xi=\frac{x}{t}<-\eps$, $\eps$ be any small positive number, the solution $u(x,t)$ of the initial value problem (\ref{spe})-(\ref{spe-ini}) tends to $0$ fast decay as $t\rightarrow \infty$.
\end{theorem}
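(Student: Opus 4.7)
The plan is to apply the Deift--Zhou nonlinear steepest descent method to the Riemann--Hilbert problem (\ref{muyrhp}) for $\tilde M(y,t,k)$ in the regime where the phase has no real stationary points. First, since $u_0\in\mathcal S(\R)$ and the conservation law (\ref{conslaw}) controls the total mass of $\sqrt m-1$, the correction $c_+(x,t)=\int_x^{+\infty}(\sqrt m-1)\,dx'$ is bounded uniformly in $t$, so $y/t\to\xi$ as $t\to\infty$. The hypothesis $\xi<-\eps$ therefore ensures $\xi_y:=y/t<-\eps/2$ for all sufficiently large $t$, which places us squarely in the ``no stationary phase'' regime for the phase $\theta(k):=2ky+t/(2k)$ appearing in the off-diagonal entries of $\tilde J(y,t,k)$.

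I would next draw the sign chart of $\mathrm{Im}\,\theta$ in the complex $k$-plane. Writing $k=k_R+ik_I$,
\[
\mathrm{Im}\,\theta(k)=k_I\Bigl(2y-\frac{t}{2|k|^2}\Bigr),
\]
and the bracketed factor is strictly negative throughout $\C\setminus\{0\}$ (since $y<0$). Hence $\mathrm{Im}\,\theta$ has sign opposite to $k_I$ everywhere, so $e^{i\theta}$ decays exponentially in the lower half-plane and $e^{-i\theta}$ decays exponentially in the upper half-plane, each with rate proportional to $t$ on any contour bounded away from $\R\cup\{0\}$.

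The deformation is then the standard one. Use the factorization
\[
J_0(k)=\begin{pmatrix}1&0\\\overline{r(k)}&1\end{pmatrix}\begin{pmatrix}1&r(k)\\0&1\end{pmatrix},
\]
so that $\tilde J$ splits into an upper triangular factor carrying $e^{i\theta}r(k)$ and a lower triangular factor carrying $e^{-i\theta}\overline{r(k)}$. Since $a(k)$ has no zeros and $u_0\in\mathcal S(\R)$, the reflection coefficient $r(k)$ is Schwartz on $\R$; I would extend $r$ off the real axis by rational approximation (as in Deift--Zhou) or by a $\bar\partial$-estimate, then open the upper factor into a contour $\Sig_-$ in the lower half-plane and the lower factor into a contour $\Sig_+$ in the upper half-plane, both bounded away from $\R\cup\{0\}$. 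By the sign analysis above, the resulting jump on $\Sig_+\cup\Sig_-$ differs from the identity by a matrix of size $Ce^{-\dta t}$ for some $\dta=\dta(\eps)>0$, yielding a small-norm Riemann--Hilbert problem whose unique solution satisfies $\tilde M(y,t,k)=\id+O(e^{-\dta t})$ uniformly on compact subsets of $\C\setminus(\Sig_+\cup\Sig_-)$, in particular near $k=0$.

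Inserting this estimate, together with the analogous bound on $\partial_k\tilde M$ at $k=0$, into the reconstruction formula (\ref{cspesolu}) gives $u(x,t)=O(e^{-\dta t})$ as $t\to\infty$ with $\xi<-\eps$, which is the desired fast decay. The main technical obstacle is Step 3: unlike the NLS phase, here $\theta$ has a pole at $k=0$, so the contours $\Sig_\pm$ must be threaded between the real axis and the origin without losing exponential smallness, and the analytic extension of $r$ must be handled carefully since $r$ is only defined a priori on $\R$. One must also verify that the small-norm bound on $\tilde M-\id$ is uniform in a neighborhood of $k=0$ so that both limits in (\ref{cspesolu}) inherit the $O(e^{-\dta t})$ rate.
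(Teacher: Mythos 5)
Your sign analysis of the phase is correct, but the factorization you choose cannot be opened in this regime, and this is exactly the point where the paper's proof takes a different (and necessary) route. The jump relation is $\tilde M_+=\tilde M_-\tilde J$ with $\tilde M_+$ the boundary value from the upper half-plane, so in any factorization $\tilde J=b_-^{-1}b_+$ the identity $\tilde M_+b_+^{-1}=\tilde M_-b_-^{-1}$ forces the \emph{right} factor $b_+$ to be absorbed into the upper half-plane and the \emph{left} factor into the lower half-plane. With your factorization $J_0=\bigl(\begin{smallmatrix}1&0\\ \bar r&1\end{smallmatrix}\bigr)\bigl(\begin{smallmatrix}1&r\\ 0&1\end{smallmatrix}\bigr)$, the right factor carries $r\,e^{i\theta}$, which by your own sign chart \emph{grows} exponentially in the upper half-plane when $y<0$. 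Deforming it downward instead, as you propose (``open the upper factor into $\Sig_-$ in the lower half-plane''), is algebraically inconsistent with the jump relation: there is no way to move the right-hand factor of $\tilde J$ to the minus side. Nor can you simply reverse the order, since the product of the two triangular factors in the opposite order is not $J_0$ without a diagonal correction. As written, the deformation produces jump matrices that blow up like $e^{\dta t}$ rather than decay.

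The repair is the three-factor factorization (\ref{Jycase1fac}): upper-triangular (carrying $e^{2it\tha}$, on the left, deformed downward where it decays) times $\mathrm{diag}\bigl((1+|r|^2)^{-1},\,1+|r|^2\bigr)$ times lower-triangular (carrying $e^{-2it\tha}$, on the right, deformed upward where it decays). This is the factorization adapted to the reversed signature table of $\im\tha$ when $\tilde\xi<0$, and it forces the introduction of the scalar function $\tilde\dta(k)$ solving the scalar Riemann--Hilbert problem (\ref{case1dtadef}) on all of $\R$ in order to remove the middle diagonal factor before opening lenses; your proposal omits $\tilde\dta$ entirely. Your remaining ingredients --- the analytic (or Fourier-splitting) approximation of $r$, the small-norm estimate uniform near $k=0$ despite the pole of $\tha$ there, and the reconstruction via (\ref{cspesolu}) --- are the right ones and match the paper, but the core deformation step must be redone with the correct factorization.
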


\begin{theorem}\label{main2}
Let $u_0(x)$ satisfy the hypotheses of Theorem \ref{main1}. For $\xi=\frac{x}{t}>\eps$, $\eps$ be any small positive number, the solution $u(x,t)$ of the initial value problem (\ref{spe})-(\ref{spe-ini}) equals
\be\label{uxtasy-final}
u(x,t)=\frac{1}{\sqrt{t}}\left[
\sqrt{\frac{-\nu(-\kappa_0)}{\kappa_0}}e^{i\tilde \phi_1}-\sqrt{\frac{-\nu(\kappa_0)}{\kappa_0}}e^{-i\tilde \phi_2}
\right],
\ee
where
\be
\kappa_0=\frac{1}{2}\sqrt{\frac{t}{|x|}},
\ee
\begin{subequations}\label{tildephi-12}
\be
\ba{l}
\tilde \phi_1=-\frac{i\pi}{4}+\mbox{arg}(r(-\kappa_0))+\mbox{arg}(\Gam(i\nu(-\kappa_0)))+\nu(-\kappa_0)\ln(\frac{\kappa_0^3}{t})-\nu(\kappa_0)\ln(4\kappa_0^2)\\
{}-\frac{1}{\pi}\int_{-\kappa_0}^{\kappa_0}\ln(s+\kappa_0)d\ln(1+|r(s)|^2)-\frac{t}{\kappa_0}-\frac{1}{\pi}\int_{-\kappa_0}^{\kappa_0}\frac{\ln(1+|r(s)|^2)}{s}ds-2id-2\kappa_0\dta_1,
\ea
\ee
\be
\ba{l}
\tilde \phi_2=\frac{3i\pi}{4}-\mbox{arg}(r(-\kappa_0))-\mbox{arg}(\Gam(-i\nu(\kappa_0)))+\nu(\kappa_0)\ln(\frac{\kappa_0^3}{t})-\nu(-\kappa_0)\ln(4\kappa_0^2)\\
{}+\frac{1}{\pi}\int_{-\kappa_0}^{\kappa_0}\ln(\kappa_0-s)d\ln(1+|r(s)|^2)-\frac{t}{\kappa_0}-\frac{1}{\pi}\int_{-\kappa_0}^{\kappa_0}\frac{\ln(1+|r(s)|^2)}{s}ds+2id+2\kappa_0\dta_1,
\ea
\ee
\end{subequations}
with $\nu(k)$ defined by (\ref{nudef}), $d$ (which is a pure-image constant) defined by (\ref{ddef}) and $\Gam(x)$ defined as a Gamma function.
\end{theorem}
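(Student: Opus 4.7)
The plan is to apply the Deift--Zhou nonlinear steepest descent method to the Riemann--Hilbert problem (\ref{muyrhp})--(\ref{rhpnorcond}) with phase $\tha(k)=ky+t/(4k)$. In the sector $\x=x/t>\eps$, using $y=x+O(1)$ so that $y>0$ for large $t$, the critical-point equation $\tha'(k)=y-t/(4k^2)=0$ produces two real stationary points $\pm\kappa_0$ with $\kappa_0=\tfrac12\sqrt{t/|x|}$. The signature chart of $\re(i\tha)$ on $\R$ alternates three times, forcing two complementary triangular factorizations of $J_0$: the one with diagonal middle factor $(1+|r|^2)^{-\sig_3}$ on $(-\kappa_0,\kappa_0)$, and the one with identity middle factor on the two outer intervals. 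To implement both consistently I would introduce the scalar function $\dta(k)$ solving the scalar RHP with jump $\dta_+=\dta_-(1+|r|^2)$ on $(-\kappa_0,\kappa_0)$ and $\dta\to 1$ at infinity, explicitly
\be
\dta(k)=\exp\left\{\frac{1}{2\pi i}\int_{-\kappa_0}^{\kappa_0}\frac{\ln(1+|r(s)|^2)}{s-k}\,ds\right\},
\ee
and conjugate $M^{(1)}=\tilde M\dta^{-\sig_3}$, which preserves normalization at infinity and rearranges the jump into a form suitable for lens opening.

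Next I would open lenses through $\pm\kappa_0$ along the steepest-descent rays, producing an equivalent RHP whose jump, away from small discs about $\pm\kappa_0$, is exponentially close to the identity as $t\to\infty$. Inside each disc the local problem reduces, after a linear rescaling centered at the stationary point, to a parabolic cylinder model RHP solved explicitly in terms of $\Gam(\pm i\nu(\pm\kappa_0))$. The crucial structural point, already flagged in items (ii)--(iv) of the introduction, is that the symmetry (\ref{sym-sp}) that held for the SP equation is absent here, so the two local problems at $+\kappa_0$ and at $-\kappa_0$ cannot be related by reflection and must be solved independently; each contributes a term of order $t^{-1/2}$ to the final asymptotics, and together they account for the two summands in (\ref{uxtasy-final}).

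Finally I would reconstruct $u(x,t)$ from the formula of Theorem \ref{rhpthem} by expanding $\tilde M(y,t,k)$ at $k=0$. Transporting each parabolic cylinder output back to $k=0$ through Cauchy integrals over the model-disc boundaries gives the leading $t^{-1/2}$ contribution, while the $k\to 0$ expansion of $\dta(k)$ supplies the constant phase encoded by $\dta_1$. Collecting every phase (the $\ln(\kappa_0^3/t)$ from the stationary-point rescaling, the $\ln(4\kappa_0^2)$ from the $\dta$ factor evaluated at the opposite stationary point, the Cauchy integrals of $\ln(1+|r|^2)$, the $\arg\Gam$ terms from parabolic cylinder matching, the pure-imaginary $d$ from (\ref{ddef}), and the $2\kappa_0\dta_1$ from the $k=0$ expansion) reproduces exactly $\tilde\phi_1$ and $\tilde\phi_2$ in (\ref{tildephi-12}). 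The main obstacle is this last phase accounting: every constant matters because the two $\tilde\phi_j$ contain roughly a dozen terms each, the nonvanishing value of $\dta(0)$ (item (iii) of the introduction) injects a phase that has no SP counterpart, and one must still verify that passing from the parametric $y$ back to $x$ via $y=x+c_+$ modifies $\kappa_0$ and the phases only at subleading order, so that (\ref{uxtasy-final}) can be stated directly in $(x,t)$.
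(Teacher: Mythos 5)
Your outline follows the paper's proof essentially step for step: the $\dta$-conjugation removing the diagonal factor on $(-\kappa_0,\kappa_0)$, the lens opening, the two independent parabolic-cylinder local models at $\pm\kappa_0$ forced by the absence of the symmetry (\ref{sym-sp}), the reconstruction of $u$ from the $k=0$ expansion, and the final phase bookkeeping involving $\dta_0$, $\dta_1$, $d$ and the change of scale from $y$ back to $x$. There is, however, one genuine gap: you never address what happens at $k=0$ itself, which is precisely where this WKI-type problem differs from a generic Deift--Zhou analysis. The solution is read off from the expansion of $\tilde M(y,t,k)$ at $k=0$ (Theorem \ref{rhpthem}), and $k=0$ lies on the segment $(-\kappa_0,\kappa_0)$ where the lenses are opened and where the phase $\tha=\tilde\xi k+\frac{1}{4k}$ has an essential singularity. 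A priori the lens-opening transformation alters the coefficients $\tilde M_0,\tilde M_1$ of that expansion, and the Cauchy representations (\ref{m20and1}) carry the weights $\eta^{-1}$ and $\eta^{-2}$ on a contour reaching the origin. The paper closes this with the proposition that $r(k)=O(k^3)$ as $k\to 0$ (derived from (\ref{akasy}) and $|r|^2=|a|^{-2}-1$), which guarantees both that the triangular lens factors are $\id+O(k^3)$ near the origin — so the expansion through order $k$ is untouched — and that the integrals are nonsingular. Without establishing some such vanishing of $r$ at the origin, the "transport back to $k=0$" step in your last paragraph does not go through.

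A smaller slip: with $\dta$ defined as you (and the paper) define it, the conjugation that cancels the diagonal middle factor on $(-\kappa_0,\kappa_0)$ is $\tilde M^{(1)}=\tilde M\,\dta^{\sig_3}$, not $\tilde M\,\dta^{-\sig_3}$; with your sign the middle factor is squared rather than removed. This is trivially repaired, but the phase constants $\dta_0^{\pm 2}$ entering $\tilde\phi_1,\tilde\phi_2$ would come out with the wrong sign if carried through literally.
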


\begin{remark}
The sectors of different asymptotic behavior match, as $\eps \rightarrow 0$, through the fast decay. Indeed, as $\frac{x}{t}\rightarrow 0^-$, then $\kappa_0\rightarrow \infty$ and $\nu(\kappa_0)\rightarrow 0$ and thus the amplitude in (\ref{uxtasy-final}) decays faster.
\end{remark}

\subsection{Proof of Theorem \ref{main1}}
$ $
\par
The key feature of the method is the deformation of the original Riemann-Hilbert problem according to the signature table for the phase function $\theta$ in jump matrix $\tilde J$ written in the form
\be
\tilde J(y,t,k)=e^{it\tha(\tilde\xi,k)\hat\sig_3}J_0(k),
\ee
where
\be
\tha(\tilde\xi,k)=\tilde\xi k+\frac{1}{4k},
\ee
\be
\tilde\xi=\frac{y}{t}.
\ee
\par
The signature table is the distribution of signs of $\im \tha(\tilde\xi,k)$ in the $k-$plane, 
\be
\im \tha(\tilde\xi,k)=k_2[\tilde\xi-\frac{1}{4(k_1^2+k_2^2)}],
\ee
where $k_1$ and $k_2$ are the real and image part of $k$, respectively, i.e. $k=k_1+ik_2$.
\par
Under the condition $\tilde \xi<-\eps$ for any $\eps>0$, then we have $\im \tha(\tilde\xi,k)>0$ and $\im \tha(\tilde\xi,k)<0$, as $k_2=\im k<0$ and $k_2=\im k>0$, respectively, see figure \ref{fig1}.

\begin{figure}[th]
\centering
\includegraphics{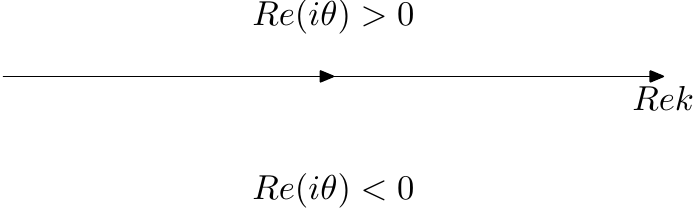}
\caption{\small The signs of $\re i\tha$ in the $k-$plane in the case $\tilde\xi <0$.}\label{fig1}
\end{figure}

This suggests the use of the following factorization of the jump matrix for all $k\in \R$:
\be\label{Jycase1fac}
\tilde J(y,t,k)=\left(\ba{cc}1&\frac{r(k)}{1+|r(k)|^2}e^{2it\tha}\\0&1\ea\right)\left(\ba{cc}\frac{1}{1+|r(k)|^2}&0\\0&1+|r(k)|^2\ea\right)
\left(\ba{cc}1&0\\\frac{\ol{r(k)}}{1+|r(k)|^2}e^{-2it\tha}&1\ea\right)
\ee
\par

In order to deform the original Riemann-Hilbert problem (\ref{muyrhp}), we need introduce a scalar function $\tilde \dta(k)$ which is defined by the following scalar Riemann-Hilbert problem
\be\label{case1dtadef}
\left\{
\ba{ll}
\tilde \dta_+(k)=\tilde \dta_-(k)(1+|r(k)|^2),&k\in \R\backslash {0},\\
\tilde \dta(k)\rightarrow 1,&k\rightarrow \infty,
\ea
\right.
\ee
which  has a explicit solution 
\be
\tilde \dta(k)=\exp\left[\frac{1}{2\pi i}\int_{-\infty}^{\infty}\frac{\ln(1+|r(s)|^2)}{s-k}ds\right].
\ee

Then, we make a transformation of $\tilde M(y,t,k)$ as
\be
\tilde M^{(1)}(y,t,k)=\tilde M(y,t,k)\tilde \dta^{\sig_3}.
\ee
It yields that the Riemann-Hilbert problem for $\tilde M^{(1)}(y,t,k)$ is
\be
\left\{
\ba{ll}
\tilde M^{(1)}_+(y,t,k)=\tilde M^{(1)}_-(y,t,k)\tilde J^{(1)}(y,t,k),&k\in \R,\\
\tilde M^{(1)}(y,t,k)\rightarrow \id,&k\rightarrow \infty,
\ea
\right.
\ee
where the jump matrix $\tilde J^{(1)}(y,t,k)$ is defined by
\be\label{case1J1}
\tilde J^{(1)}(y,t,k)=\left(\ba{cc}1&\frac{r(k)}{1+|r(k)|^2}\frac{1}{\tilde \dta^2_-(k)}e^{2it\tha}\\0&1\ea\right)
\left(\ba{cc}1&0\\\frac{\ol{r(k)}}{1+|r(k)|^2}\tilde \dta^2_+(k)e^{-2it\tha}&1\ea\right)
\ee

Without loss of generality, we may assume that the function $\frac{\ol{r(k)}}{1+|r(k)|^2}e^{-2it\tha}$, i.e., the lower-triangular factor of the second matrix of the right-hand side of (\ref{case1J1}) extends analytically to the region $\im k>0$ and continuous in the closure of the region. And the first matrix of the right-hand side of (\ref{case1J1}) extends analytically to the region $\im k<0$ and continuous in the closure of the region by taking conjugate. If the analytic conditions are dropped, we have a procedure to obtain the 'weak' analytic conditions, see the Remark \ref{case1Anly}.
\par

Hence, we can introduce the following transformation if we have the proper analytic conditions:
\be
\tilde M^{(2)}(y,t,k)=\left\{
\ba{ll}
\tilde M^{(1)}(y,t,k) \left(\ba{cc}1&0\\-\frac{\ol{r(k)}}{1+|r(k)|^2}\tilde \dta^2_+(k)e^{-2it\tha}&1\ea\right),& k\in \Omega_1\cup \Omega_3,\\
\tilde M^{(1)}(y,t,k),& k\in \Omega_2\cup \Omega_5,\\
\tilde M^{(1)}(y,t,k) \left(\ba{cc}1&\frac{r(k)}{1+|r(k)|^2}\frac{1}{\tilde \dta^2_-(k)}e^{2it\tha}\\0&1\ea\right),&k\in \Omega_4\cup \Omega_6,
\ea
\right.
\ee
where $\Omega_j,j=1,2,\dots,6$ are shown in figure \ref{fig2}.
\begin{figure}[th]
\centering
\includegraphics{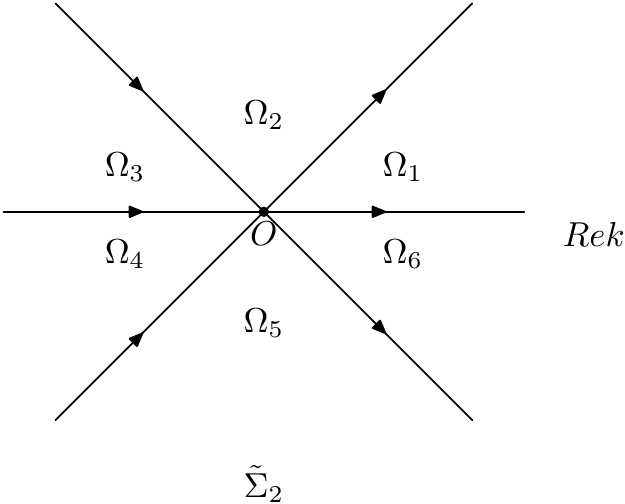}
\caption{\small The sets $\{\Omega_j\}_{j=1}^{6}$, in the $k-$plane as $\tilde\xi <0$.}\label{fig2}
\end{figure}

We obtain the new Riemann-Hilbert problem for $\tilde M^{(2)}(y,t,k)$,
\be
\left\{
\ba{l}
\tilde M_+^{(2)}(y,t,k)=\tilde M_-^{(2)}(y,t,k) \tilde J^{(2)}(y,t,k),\quad k\in \tilde\Sigma_2,\\
\tilde M^{(1)}(y,t,k)\rightarrow \id,\quad k\rightarrow \infty.
\ea
\right.
\ee
where
\be
\tilde J^{(1)}(y,t,k)=\left\{
\ba{ll}
\left(\ba{cc}1&0\\\frac{\ol{r(k)}}{1+|r(k)|^2}\tilde \dta^2_+(k)e^{-2it\tha}&1\ea\right),&k\in \tilde\Sigma_2\cap D_1,\\
\left(\ba{cc}1&\frac{r(k)}{1+|r(k)|^2}\frac{1}{\tilde \dta^2_-(k)}e^{2it\tha}\\0&1\ea\right),&k\in \tilde\Sigma_2\cap D_2.
\ea
\right.
\ee
\begin{theorem}
As $t\rightarrow \infty$, the solution $u(x,t)$ of the initial value problem (\ref{spe})-(\ref{spe-ini}) decays fast in the range $\xi>\eps$ for any $\eps >0$.
\end{theorem}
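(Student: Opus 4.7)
The plan is to complete the small-norm Riemann-Hilbert argument that has just been set up in this subsection. The deformations $\tilde M \mapsto \tilde M^{(1)} \mapsto \tilde M^{(2)}$ were designed precisely so that, under the sign-of-$\im\tha$ picture of Figure \ref{fig1}, the off-diagonal exponentials $e^{\pm 2it\tha}$ appearing in $\tilde J^{(2)}$ decay. Since $\tha'(\tilde\xi,k)=\tilde\xi-1/(4k^2)$ has no real zero in the regime under consideration (the would-be stationary point $k=\pm 1/(2\sqrt{-\tilde\xi})$ leaves the real axis whenever $\tilde\xi$ has the wrong sign), there is no model problem to solve: the whole contribution comes from a small-norm estimate.

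The first step is a quantitative exponential bound on the jump. On each ray of $\tilde\Sigma_2$ one has $k=k_1+ik_2$ with $k_2$ of definite sign and bounded away from zero in the direction of $\im\tha$, so
\begin{equation*}
\bigl|e^{-2it\tha(k)}\bigr|=e^{2t k_2\bigl(\tilde\xi-\frac{1}{4|k|^2}\bigr)}\le e^{-c\,t(|k|^2+1)},\qquad k\in\tilde\Sigma_2\cap D_1,
\end{equation*}
with a constant $c=c(\eps)>0$, and similarly on $\tilde\Sigma_2\cap D_2$. Using that $\tilde\dta^{\pm1}$ is uniformly bounded on $\R$ (the integrand $\ln(1+|r|^2)$ is in $L^1$), that $r(k)$ is Schwartz, and that $1+|r|^2\ge 1$, these bounds combine to
\begin{equation*}
\|\tilde J^{(2)}-\id\|_{L^\infty\cap L^2\cap L^1(\tilde\Sigma_2)}\le C e^{-ct}.
\end{equation*}

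The second step is the standard Beals--Coifman/Zhou inversion. Writing $w=\tilde J^{(2)}-\id$ and $C_w$ for the associated Cauchy operator, the above bound gives $\|C_w\|_{L^2(\tilde\Sigma_2)\to L^2(\tilde\Sigma_2)}\le C e^{-ct}$, so $(I-C_w)^{-1}$ exists with norm $\le 2$ for $t$ large and
\begin{equation*}
\tilde M^{(2)}(y,t,k)=\id+\frac{1}{2\pi i}\int_{\tilde\Sigma_2}\frac{\bigl((I-C_w)^{-1}\id\bigr)(s)\,w(s)}{s-k}\,ds=\id+O(e^{-ct}),
\end{equation*}
uniformly for $k$ in compact subsets of $\C\setminus\tilde\Sigma_2$, in particular near $k=0$. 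Unwinding the two deformations (which are both identity in a neighborhood of $0$ once one checks $0\notin\tilde\Sigma_2$) gives $\tilde M(y,t,k)=\tilde\dta(k)^{-\sig_3}+O(e^{-ct})$ near $k=0$, and therefore $(\tilde M(y,t,0))^{-1}\tilde M(y,t,k)=\id+O(e^{-ct}\,k)$ near $k=0$. Plugging this into the reconstruction formulae of Theorem \ref{rhpthem} yields $e^{-2d}u(y,t)=O(e^{-ct})$ and $x-y=O(e^{-ct})$, which together give $u(x,t)=O(e^{-ct})$ as required.

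The main obstacle is the analyticity assumption that allowed the lower/upper triangular factors to be moved off the real axis in the definition of $\tilde M^{(2)}$. For merely Schwartz $r$ this is not automatic, and one must pass through the "weak analytic" decomposition alluded to in Remark \ref{case1Anly}: one splits $r/(1+|r|^2)$ as $h_{\rm an}+h_{\rm err}$ where $h_{\rm an}$ admits an analytic extension to a strip (e.g. by truncating a Fourier transform) and $h_{\rm err}$ is smooth with $|h_{\rm err}(k)|\le C_N(1+|k|)^{-N}$ on $\R$, then shows that replacing $r$ by $h_{\rm an}$ in the above deformation introduces an error which, although only algebraically small, beats the algebraic growth of the rest of the analysis; combined with the truly exponential decay of the analytic piece one still gets $O(t^{-N})$ for every $N$, i.e.\ fast decay in the sense of the theorem. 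The bookkeeping of the two pieces across the contour deformation is the only non-routine part.
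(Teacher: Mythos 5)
Your proposal is correct and follows essentially the same route as the paper, which disposes of this theorem in a two-sentence sketch (exponentially small jump $\Rightarrow$ solution near $\id$ $\Rightarrow$ $u$ decays, with the non-analyticity of $r$ deferred to the Fourier-splitting remark); you simply supply the quantitative small-norm/Beals--Coifman details and the reconstruction at $k=0$ that the paper omits. The only slips are cosmetic: the exponent should be $O(|k|)$ rather than $O(|k|^2)$ on the rays, and the contour does pass through $k=0$, but the evaluation there is saved by the rapid vanishing of the jump at the origin (as the paper notes in the parallel $\tilde\xi>0$ case), not by $0\notin\tilde\Sigma_2$.
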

\begin{proof}
The above transformation reduces the Riemann-Hilbert problem of $\tilde M^{(2)}(y,t,k)$ to that with exponentially decaying in t to the identity matrix jump matrix. Since this Riemann-Hilbert problem is holomorphic, its solution decays fast to $\id$ and consequently $\tilde u(y,t)$ decays fast to $0$ while $y$ approaches fast $x$ and thus the domain $\tilde \xi <-\eps$ and $\xi<-\eps$ coincide asymptotically.
\end{proof}
\begin{remark}\label{case1Anly}
If the analytic conditions are dropped, then there is a procedure to obtain the 'weak' analytic conditions. To show this, we write $\frac{\ol{r(k)}}{1+|r(k)|^2}$ as a Fourier transform with respect to $\tha$,
\be
\ba{rcl}
\frac{\ol{r(k)}}{1+|r(k)|^2}e^{-2it\tha}&=&\frac{e^{-2it\tha}}{\sqrt{2\pi}(k+i)^2}\int_{-\infty}^{\infty}e^{is\tha(k)}\hat{g}(s)ds\\
{}&=&\frac{e^{-2it\tha}}{\sqrt{2\pi}(k+i)^2}\int_{t}^{\infty}e^{is\tha(k)}\hat{g}(s)ds+\frac{e^{-2it\tha}}{\sqrt{2\pi}(k+i)^2}\int_{-\infty}^{t}e^{is\tha(k)}\hat{g}(s)ds\\
{}&=&e^{-2it\tha(k)}h_{\Rmnum{1}}(k)+e^{-2it\tha(k)}h_{\Rmnum{2}}(k),
\ea
\ee
where
\[
\ba{l}
\hat{g}(s)=\frac{1}{\sqrt{2\pi}}\int_{-\infty}^{\infty}e^{-is\tha(k)}g(\tha)d\tha,\\
g(\tha)=\frac{\ol{r(k(\tha))}}{1+|r(k(\tha))|^2}(k(\tha)+i)^2.
\ea
\]

Here $e^{-2it\tha(k)}h_{\Rmnum{2}}(k)$ has an analytic continuation to the upper half-plane and decays exponentially in $L^{1}\cap L^{\infty}(\Sig\cap\{k|\im k<0\})$, as $t\rightarrow \infty$, while $e^{-2it\tha(k)}h_{\Rmnum{1}}(k)$ decays rapidly in $L^{1}\cap L^{\infty}(\R)$, as $t\rightarrow \infty$.
\end{remark}
\subsection{Proof of Theorem \ref{main2}}
$ $
\par
Denote $k_0=\frac{1}{2\sqrt{\tilde \xi}}$, then  the image part of the function $\tha$ becomes
\be
\im \tha=\frac{k_2(k_1^2+k_2^2-k_0^2)}{4k_0^2(k_1^2+k_2^2)}.
\ee
Under the assumption $\tilde \xi>\eps>0$, we have the sign picture of $\re (i\tha)$ as follows,

\begin{figure}[th]
\centering
\includegraphics{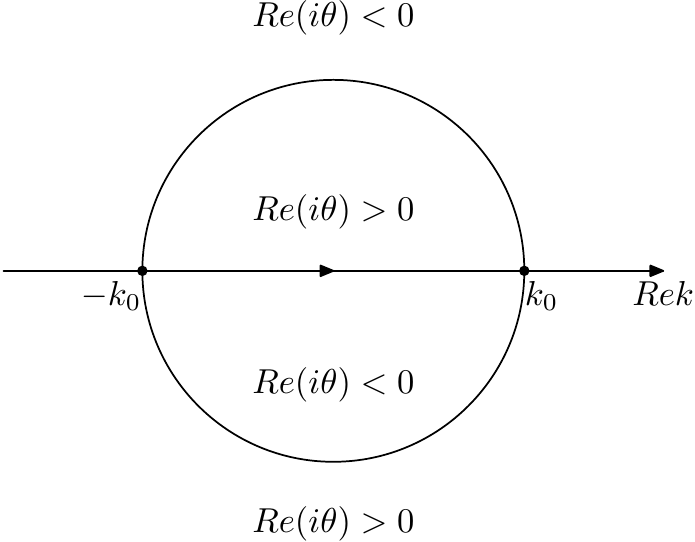}
\caption{\small The signs of $\re i\tha$ in the $k-$plane in the case $\tilde\xi >0$.}\label{fig3}
\end{figure}

Hence, this suggests the use of the following factorizations of the jump matrix $\tilde J(y,t,k)$:
{\small
\be\label{Jycase2fac}
\tilde J(y,t,k)=\left\{
\ba{ll}
\left(\ba{cc}1&0\\\ol{r(k)}e^{-2it\tha}&1\ea\right)\left(\ba{cc}1&r(k)e^{2it\tha}\\0&1\ea\right),&|k|>k_0,\\
\left(\ba{cc}1&\frac{r(k)}{1+|r(k)|^2}e^{2it\tha}\\0&1\ea\right)\left(\ba{cc}\frac{1}{1+|r(k)|^2}&0\\0&1+|r(k)|^2\ea\right)
\left(\ba{cc}1&0\\\frac{\ol{r(k)}}{1+|r(k)|^2}e^{-2it\tha}&1\ea\right),&|k|<k_0.
\ea
\right.
\ee
}

\subsubsection{The conjugate transformation}
The aim of the first transformation involves the removal of the diagonal factor in (\ref{Jycase2fac}) for $|k|<k_0$.

Introducing a scalar function $\delta(k)$ which satisfies the following scalar Riemann-Hilbert problem
\be
\left\{
\ba{rll}
\dta_+(k)&=\dta_-(k)(1+|r(k)|^2),&|k|<k_0,\\
&=\dta_-(k)=\dta(k),&|k|>k_0.\\
\dta(k)&\rightarrow 1&k\rightarrow\infty.
\ea
\right.
\ee
Then the function $\dta(k)$ is given by
\be\label{dtadef}
\dta(k)=\exp\left[\frac{1}{2\pi i}\int_{-k_0}^{k_0}\frac{\ln{(1+|r(s)|^2)}}{s-k}ds\right].
\ee
The conjugate transformation
\be\label{m1trans}
\tilde M^{(1)}(y,t,k)=\tilde M(y,t,k)\delta(k)^{\sig_3},
\ee
yields the Riemann-Hilbert problem for $\tilde M^{(1)}(y,t,k)$
\begin{subequations}\label{mu1rhp}
\be\label{Jy1rhp}
\left\{
\ba{l}
\tilde M^{(1)}_+(y,t,k)=\tilde M^{(1)}_-(y,t,k)\tilde J^{(1)}(y,t,k),\quad k\in\R,\\
\tilde M^{(1)}(y,t,k)\rightarrow \id,\quad k\rightarrow \infty,
\ea
\right.
\ee
where
\be
\tilde J^{(1)}(y,t,k)=\left\{
\ba{ll}
\left(\ba{cc}1&0\\\ol{r(k)}\dta^2e^{-2it\tha}&1\ea\right)\left(\ba{cc}1&r(k)\dta^{-2}e^{2it\tha}\\0&1\ea\right),&|k|>k_0,\\
\left(\ba{cc}1&\frac{r(k)}{1+|r(k)|^2}\dta_-^{-2}e^{2it\tha}\\0&1\ea\right)
\left(\ba{cc}1&0\\\frac{\ol{r(k)}}{1+|r(k)|^2}\dta_+^2e^{-2it\tha}&1\ea\right),&|k|<k_0.
\ea
\right.
\ee
\end{subequations}

{\bf Now, let us come back to the solution $u(x,t)$.}
From (\ref{dtadef}) it follows that
\be\label{dtalargek}
\dta(k)=\dta_0(1+k\dta_1+O(k^2)),\quad k\rightarrow 0,
\ee
where
\be\label{dta0def}
\dta_0=e^{\frac{1}{2\pi i}\int_{-k_0}^{k_0}\frac{\ln(1+|r(s)|^2)}{s}ds},\quad \dta_1=\frac{1}{2\pi i}\int_{-k_0}^{k_0}\frac{\ln{(1+|r(s)|^2)}}{s^2}ds.
\ee
If we write
\be
\tilde M(y,t,k)=\tilde M_0(y,t)+k\tilde M_1(y,t)+O(k^2),\quad k\rightarrow 0,
\ee
and
\be
\tilde M^{(1)}(y,t,k)=\tilde M^{(1)}_0(y,t)+k\tilde M^{(1)}_1(y,t)+O(k^2),\quad k\rightarrow 0,
\ee
then from the transformation (\ref{m1trans}) we obtain
\be
\tilde M_0(y,t)=\tilde M^{(1)}_0(y,t)\dta_0^{-\sig_3},\quad \tilde M_1(y,t)=(\tilde M^{(1)}_1(y,t)-\dta_1\tilde M^{(1)}_0(y,t)\sig_3)\dta_0^{-\sig_3},
\ee
Hence, we have
\begin{subequations}
\be
iu(x,t)e^{-2d}=\dta_0^2\left[(\tilde M^{(1)}_0)^{-1}\tilde M^{(1)}_1\right]_{12},
\ee
\be
c_+=-i\left(\left[(\tilde M^{(1)}_0)^{-1}\tilde M^{(1)}_1\right]_{11}-\dta_1\right).
\ee
\end{subequations}

\subsubsection{The ``open lense" transformation}
For the convenience of the notation, we transverse the direction of the component $|k|>k_0$ of the jump contour $\R$ for the Riemann-Hilbert problem of $\tilde M^{(1)}(y,t,k)$. Like the proof of Theorem \ref{main1}, we assume that the functions appeared in the right-hand side of the jump matrix $\tilde J^{(1)}(y,t,k)$ can be analytic extension to the proper regions. If the analytic extension dropped, then we can do a similar procedure to obtain the 'weak' analytic condition as the appendix A of \cite{jiansp}.

If we write
\be
\tilde J^{(1)}(y,t,k)=(b_-^{(1)}(y,t,k))^{-1}b_+^{(1)}(y,t,k),
\ee
then, we can make a transformation as
\be
\tilde M^{(2)}(y,t,k)=\tilde M^{(1)}(y,t,k)T^{(12)}(y,t,k),
\ee
where $T^{(12)}(y,t,k)$ is defined as
\be
T^{(12)}(y,t,k)=\left\{
\ba{ll}
(b_-^{(1)}(y,t,k))^{-1},&k\in \Omega_1\cup\Omega_3\cup\Omega_9\cup\Omega_{10},\\
\id,&k\in \Omega_2\cup\Omega_5,\\
(b_+^{(1)}(y,t,k))^{-1},&k\in \Omega_4\cup\Omega_6\cup\Omega_7\cup\Omega_{8},
\ea
\right.
\ee
with the $\Omega_j$ showed as the Figure \ref{fig4}.

\begin{figure}[th]
\centering
\includegraphics{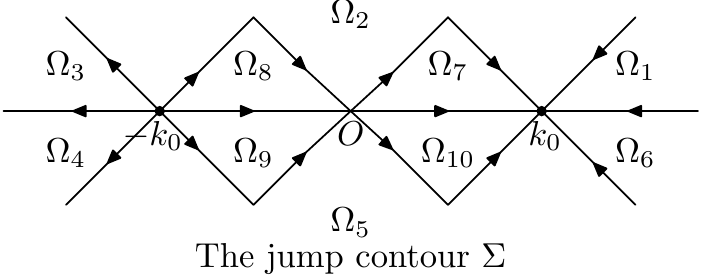}
\caption{\small The sets $\{\Omega_j\}_{j=1}^{10}$, in the $k-$plane as $\tilde\xi >0$.}\label{fig4}
\end{figure}

Thus, we obtain the Riemann-Hilbert problem for $\tilde M^{(2)}(y,t,k)$
\begin{subequations}\label{m2rhp}
\be
\left\{
\ba{l}
\tilde M^{(2)}_+(y,t,k)=\tilde M^{(2)}_-(y,t,k)\tilde J^{(2)}(y,t,k),\quad k\in \Sigma\\
\tilde M^{(2)}(y,t,k)\rightarrow \id,\quad k\rightarrow \infty.
\ea
\right.
\ee
where
\be
\ba{rcl}
\tilde J^{(2)}(y,t,k)&=&
(b^{(2)}_-(y,t,k))^{-1}b^{(2)}_+(y,t,k)\\
{}&=&\left\{
\ba{ll}
\id,&k\in \R,\\
b_+^{(1)}(y,t,k),&k\in \Sigma\cap\re i\tha>0,\\
(b_-^{(1)}(y,t,k))^{-1},&k\in \Sigma\cap\re i\tha<0.
\ea
\right.
\ea
\ee
\end{subequations}

 Now let us come back to the solution $u(x,t)$ again. The solution $u(y,t)$ is related to the solution of the Riemann-Hilbert problem evaluated at $k=0$, it may be affected by this transformation. However, due to the following proposition, this transformation turns out not to affect the terms in the expansion of the solution of the Riemann-Hilbert problem at $k=0$ at least up to the terms of order $O(k^2)$ and thus it does not really affect the leading order asymptotic behavior of $u(y,t)$.
\begin{proposition}
The reflection coefficient $r(k)=O(k^3)$ as $k\rightarrow 0$.
\end{proposition}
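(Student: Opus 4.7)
The plan is to reduce the claim $r(k)=O(k^3)$ to the vanishing of $b(k)$ to third order at $k=0$, and then to realize $b(k)$ as the product of a unimodular exponential and a Wronskian of second columns of $\mu_j^0$ whose triple zero at $k=0$ is forced by Proposition~2.2. Since $a(0)=e^d\neq 0$ by (\ref{akasy}), the claim $r(k)=O(k^3)$ is equivalent to $b(k)=O(k^3)$.

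First I would extract a Wronskian representation for $b(k)$. The scattering relation (\ref{scatermatrix}), read in columns, gives $[\mu_1]_2=b(k)e^{2ikp}[\mu_2]_1+a(k)[\mu_2]_2$, and using $\det\mu_2=1$ (the relevant $V_1$ is trace-free) yields $b(k)=e^{-2ikp(x,t,k)}\det([\mu_1]_2,[\mu_2]_2)$. Substituting Proposition~\ref{promujrelmu0j}, using $\det e^{-d_-\sigma_3}=1$ and $\det G\equiv 1$ (an explicit algebraic simplification exploiting $(\sqrt{m}-1)(\sqrt{m}+1)=|u_x|^2$), I obtain
\[
b(k)=E(x,t,k)\,\det\bigl([\mu_1^0]_2,[\mu_2^0]_2\bigr),\qquad E(x,t,k):=\exp\!\bigl(ikc-d-2ikx-\tfrac{it}{2k}\bigr).
\]
By the remark following (\ref{ddef}), $d$ is purely imaginary and $c$ is real, so for real $k,x,t$ the exponent of $E$ is purely imaginary; hence $|E|=1$ and $|b(k)|=|\det([\mu_1^0]_2,[\mu_2^0]_2)|$.

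Next I would apply Proposition~2.2 to both $\mu_1^0$ and $\mu_2^0$ at a fixed $(x,t)$. The decisive observation is that the $k^1$-coefficient $\bigl(\begin{smallmatrix}0&iu\\ i\bar u&0\end{smallmatrix}\bigr)$ and the off-diagonal entries of the $k^2$-coefficient ($|u|^2u_x+2u_t$ and its partner) are determined algebraically from the Lax system (\ref{mu0Laxe}) with no constant of integration, and hence coincide for $j=1,2$; only the diagonal entries $D^{(2)}_{11,j}$ and $D^{(2)}_{22,j}$ depend on $j$, being fixed by boundary data at $\mp\infty$. Writing
\[
[\mu_j^0]_2=\begin{pmatrix}iuk+(|u|^2u_x+2u_t)k^2+O(k^3)\\ 1+D^{(2)}_{22,j}k^2+O(k^3)\end{pmatrix},
\]
a direct expansion of the $2\times 2$ determinant shows that every $k^1$ and $k^2$ contribution cancels, leaving $\det([\mu_1^0]_2,[\mu_2^0]_2)=O(k^3)$. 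Combined with the Wronskian representation, $|b(k)|=O(k^3)$ for real $k$; since $a(k)=e^d+O(k)$ is bounded away from zero near $k=0$, the claim $r(k)=b(k)/a(k)=O(k^3)$ follows.

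The main obstacle is the algebraic (rather than integral) character of the off-diagonal $k^2$-coefficient of $\mu_j^0$. A superficial reading of the Volterra equations (\ref{mu0jdef}) would suggest that all four entries of the $k^2$-coefficient should differ between $\mu_1^0$ and $\mu_2^0$, which would yield only $b(k)=O(k)$. The structural input that upgrades this to $O(k^3)$ is the identity $D^{(2)}_{12}=|u|^2u_x+2u_t$ with no free constant, forced by matching $k^1$ coefficients in the $t$-part of (\ref{mu0Laxe}). I would therefore carefully verify the derivation of $D^{(1)}$ from the $x$-part and of the off-diagonal $D^{(2)}$ from the $t$-part of the Lax system before asserting the triple zero of the Wronskian.
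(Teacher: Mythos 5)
Your argument is correct, but it is a genuinely different route from the one the paper takes. The paper's proof is a one\--liner: it invokes the unitarity relation $|r(k)|^2=\tfrac{1}{|a(k)|^2}-1$ together with the expansion (\ref{akasy}), $a(k)=e^{d}(1+ikc-\tfrac{c^2}{2}k^2+O(k^3))=e^{d}e^{ikc}(1+O(k^3))$; since $d$ is purely imaginary and $c$ real, $|a(k)|^2=1+O(k^3)$ on $\R$ and hence $|r(k)|^2=O(k^3)$. That argument is shorter and uses only quantities already displayed, but it controls only the modulus: taken literally, $|r|^2=O(k^3)$ yields $r=O(k^{3/2})$ (or $O(k^2)$ after using nonnegativity of $|r|^2$), so reaching the full $O(k^3)$ by that route requires pushing (\ref{akasy}) to higher order. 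Your approach instead bounds $b(k)$ itself: the Wronskian representation $b=e^{-2ikp}\det([\mu_1]_2,[\mu_2]_2)$ extracted from (\ref{scatermatrix}), the unimodularity of the prefactor (using $\det G\equiv 1$ and $d$ purely imaginary per the remark after (\ref{ddef})), and the fact that, by (\ref{mu0asyk0}), the second columns of $\mu^0_1$ and $\mu^0_2$ agree through order $k^2$ except in the $(2,2)$\--entry, whose mismatch enters the determinant multiplied by the $O(k)$ off\--diagonal entry. The resulting cancellation gives $b(k)=O(k^3)$ and hence $r(k)=O(k^3)$ directly, which is both the stated claim and slightly more than the paper's identity delivers at face value; the price is the extra bookkeeping with Proposition~\ref{promujrelmu0j} and the (paper\--supplied) observation that the off\--diagonal $k^2$\--coefficients of $\mu^0_1$ and $\mu^0_2$ coincide. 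Both proofs rest on the same underlying fact --- the $j$\--independence of the low\--order Taylor data of $\mu^0_j$ at $k=0$ --- but yours uses it on $b$ where the paper uses it on $a$.
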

\begin{proof}
A direct calculation following from (\ref{akasy}) and from the identity $|r(k)|^2=\frac{1}{|a(k)|^2}-1$.
\end{proof}

So, if we write
\be
\tilde M^{(2)}(y,t,k)=\tilde M^{(2)}_0(y,t)+k\tilde M^{(2)}_1(y,t)+O(k^2),\quad k\rightarrow 0,
\ee
then we have
\begin{subequations}
\be
iu(x,t)e^{-2d}=\dta_0^2\left[(\tilde M^{(2)}_0)^{-1}\tilde M^{(2)}_1\right]_{12},
\ee
\be
c_+=-i\left(\left[(\tilde M^{(2)}_0)^{-1}\tilde M^{(2)}_1\right]_{11}-\dta_1\right).
\ee
\end{subequations}

Set
\be
\omega_{\pm}^{(2)}(y,t,k)=\pm(b^{(2)}_{\pm}(y,t,k)-\id),\quad \omega=\omega^{(2)}_++\omega^{(2)}_-,
\ee
and let $\mu^{(2)}(y,t,k)$ be the solution of the singular integral equation $\mu^{(2)}=\id+C_{\omega}\mu^{(2)}$, here $C_{\omega}$ is defined as
\[
C_{\om}f=C_+(f\om_-)+C_-(f\om_+),\quad \forall f\mbox{ is a $2\times 2$ matrix-valued function},
\]
where
\[
(C_{\pm}f)(k)=\int_{\Gam}\frac{f(\xi)}{\xi-k_{\pm}}\frac{d\xi}{2\pi i},\quad k\in\Gam,f\in L^2(\Gam),
\]
then
\be\label{m2sol}
\tilde M^{(2)}(y,t,k)=\id+\frac{1}{2\pi i}\int_{\Sigma}\frac{\mu^{(2)}(y,t,\eta)\omega(y,t,\eta)}{\eta-k}d\eta,\quad k\in \C\backslash \Sigma
\ee
is the solution of Riemann-Hilbert problem (\ref{m2rhp}).
\par
Expanding the integral (\ref{m2sol}) around $k=0$, we have
\begin{subequations}\label{m20and1}
\be
\tilde M^{(2)}_0(y,t)=\id+\frac{1}{2\pi i}\int_{\Sigma}\frac{\mu^{(2)}(y,t,\eta)\omega(y,t,\eta)}{\eta}d\eta,
\ee
\be
\tilde M^{(2)}_1(y,t)=\frac{1}{2\pi i}\int_{\Sigma}\frac{\mu^{(2)}(y,t,\eta)\omega(y,t,\eta)}{\eta^2}d\eta.
\ee
\end{subequations}
\begin{remark}
Since, $\omega(y,t,k)$ decays rapidly at $0$, the integral (\ref{m20and1}) are nonsingular.
\end{remark}

\subsubsection{Reduction to the model problem}
Following a similar procedure as section 5.4 and 5.5 in \cite{jiansp}, as $t\rightarrow \infty$, we can reduce the Riemann-Hilbert problem for $\tilde M^{(2)}(y,t,k)$ to two small "crosses" which are centered around $\pm k_0$, showed in Figure \ref{fig5}. And the contributions of these two crosses to the leading order asymptotic behavior of the solution $u(x,t)$ are separated out.
\begin{figure}[th]
\centering
\includegraphics{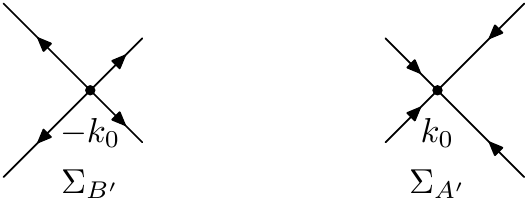}
\caption{\small The jump contour $\Sigma'$ in the $k-$plane as $\tilde\xi >0$.}\label{fig5}
\end{figure}

In order to calculate the explicit asymptotic behavior of the solution $u(x,t)$ in terms of the Riemann-Hilbert problem for $\tilde M^{(2)}(y,t,k)$, as $t\rightarrow \infty$, we need reduce the Riemann-Hilbert problem to a model problem whose solution can be given explicitly in terms of parabolic cylinder functions. To do this, we should evaluate the leading term of the function $\dta(k)e^{-it\tha(k)}$ as $k\rightarrow \pm k_0$.
\par
Recall that
\be
\dta(k)=e^{\frac{1}{2\pi i}\int_{-k_0}^{k_0}\frac{\ln(1+|r(s)|^2)}{s-k}ds}=\frac{(k-k_0)^{i\nu(k_0)}}{(k+k_0)^{i\nu(-k_0)}}e^{\chi(k)},
\ee
where
\be\label{nudef}
\nu(k)=-\frac{1}{2\pi}\ln(1+|r(k)|^2),\quad \chi(k)=-\frac{1}{2\pi i}\int_{-k_0}^{k_0}\ln(k-s)d(\ln(1+|r(s)|^2)).
\ee
And as $k\rightarrow k_0$,
\be
\tha(k)=\frac{1}{2k_0}+\frac{1}{4k_0^3}(k-k_0)^2-\frac{1}{4\eta^4}(k-k_0)^3,\quad \mbox{$\eta$ lies between $k$ and $k_0$}.
\ee
\par
Let us consider a scaling transformation as
\be
(N_{(k_0)}f)(k)=f(k_0+\sqrt{\frac{k_0^3}{t}}k),
\ee
then the function $\dta(k)e^{-it\tha(k)}$ becomes
\be
(N_{(k_0)}\dta e^{-it\tha})(k)=\dta^{(0)}_{(k_0)}\dta^{(1)}_{(k_0)},
\ee
where
\begin{subequations}
\be
\dta^{(0)}_{(k_0)}=(\frac{k_0^3}{t})^{\frac{i\nu(k_0)}{2}}(2k_0)^{-i\nu(-k_0)}e^{\chi(k_0)}e^{-\frac{it}{2k_0}},
\ee
\be
\dta^{(1)}_{(k_0)}=k^{i\nu(k_0)}(\frac{2k_0+\sqrt{\frac{k_0^3}{t}}k}{2k_0})^{-i\nu(-k_0)}e^{\chi(k_0+\sqrt{\frac{k_0^3}{t}}k)-\chi(k_0)}e^{-\frac{ik^2}{4}},
\ee
\end{subequations}
here  $k^{i\nu(k_0)}$ is cut along $(-\infty,0)$.
\par
For $k\rightarrow -k_0$, the scaling transformation is
\be
(N_{(-k_0)}f)(k)=f(-k_0+\sqrt{\frac{k_0^3}{t}}k),
\ee
then the function $\dta(k)e^{-it\tha(k)}$ becomes
\be
(N_{(-k_0)}\dta e^{-it\tha})(k)=\dta^{(0)}_{(-k_0)}\dta^{(1)}_{(-k_0)},
\ee
where
\begin{subequations}
\be
\dta^{(0)}_{(-k_0)}=(\frac{k_0^3}{t})^{\frac{-i\nu(-k_0)}{2}}(2k_0)^{i\nu(k_0)}e^{\tilde \chi(-k_0)}e^{\frac{it}{2k_0}},
\ee
\be
\dta^{(1)}_{(-k_0)}=(-k)^{-i\nu(-k_0)}(\frac{2k_0-\sqrt{\frac{k_0^3}{t}}k}{2k_0})^{i\nu(k_0)}e^{\tilde \chi(-k_0+\sqrt{\frac{k_0^3}{t}}k)-\tilde \chi(-k_0)}e^{\frac{ik^2}{4}}
\ee
\end{subequations}
with
\be
\tilde \chi(k)=-\frac{1}{2\pi i}\int_{-k_0}^{k_0}\ln(s-k)d\ln(1+|r(s)|^2),
\ee
here $(-k)^{-i\nu(-k_0)}$ is cut along $(0,\infty)$.
\par
Hence, following \cite{jiansp}, the solution of the Riemann-Hilbert problem for $\tilde M^{(2)}(y,t,k)$ formulated on two separated crosses centered at $k=\pm k_0$ can be approximated, for large t, in terms of the solution $\tilde M^{(model)}(y,t,k)$ of the model problem formulated on a cross centered at $k=0$ and evaluated for large $k$.
\par
For the $k$ centered at $k_0$ case, the solution of the model Riemann-Hilbert problem for $\tilde M^{(model)}(y,t,k)$ has the following form,
as $\im k>0$,
\be
\tilde M^{(model)}(y,t,k)=\left(\ba{cc}
\tilde M^{(model)+}_{11}&\tilde M^{(model)+}_{12}\\
\tilde M^{(model)+}_{21}&\tilde M^{(model)+}_{22}
\ea
\right),
\ee
where
\be
\ba{l}
\tilde M^{(model)+}_{11}=e^{\frac{\pi}{4}\nu(k_0)}D_{-a_1}(e^{-\frac{i\pi}{4}}k),\\
\tilde M^{(model)+}_{21}=\frac{1}{\beta^{k_0}_{12}}e^{\frac{\pi}{4}\nu(k_0)}[\frac{d}{dk}D_{-a_1}(e^{-\frac{i\pi}{4}}k)-\frac{ik}{2}D_{-a_1}(e^{-\frac{i\pi}{4}}k)],\\
\tilde M^{(model)+}_{22}=e^{-\frac{3\pi}{4}\nu(k_0)}D_{a_1}(e^{-\frac{3i\pi}{4}}k),\\
\tilde M^{(model)+}_{12}=\frac{1}{\beta^{k_0}_{21}}e^{-\frac{3\pi}{4}\nu(k_0)}[\frac{d}{dk}D_{a_1}(e^{-\frac{3i\pi}{4}}k)+\frac{ik}{2}D_{a_1}(e^{-\frac{3i\pi}{4}}k)]
\ea
\ee
with $a_1=i\nu(k_0)$,
and $\beta^{k_0}_{12}=i\tilde M^{(k_0)1}_{12}$, $\beta^{k_0}_{21}=-i\tilde M^{(k_0)1}_{21}$.
Here
$\tilde M^{(k_0)1}_{ij}$ denotes the ($i,j$)-th element of the $2\times 2$ matrix $\tilde M^{(k_0)1}$ defined by (\ref{tildemk01def}),
\be\label{tildemk01def}
\tilde M_{k_0}(y,t,k)=\id-\frac{\tilde M^{(k_0)1}}{k}+O(\frac{1}{k^2}),\quad k\rightarrow \infty,
\ee
where $\tilde M_{k_0}(y,t,k)$ solves the following the Riemann-Hilbert problem
\be
\left\{
\ba{ll}
\tilde M_{k_0+}(y,t,k)=\tilde M_{k_0-}(y,t,k)\tilde J^{(k_0)}(x,t,k),&k\in \R,\\
\tilde M_{k_0}(y,t,k)\rightarrow \id,&k\rightarrow \infty,
\ea
\right.
\ee
with
\be
\tilde J^{(k_0)}(x,t,k)=k^{-i\nu(k_0)\hat\sig_3}e^{\frac{ik^2}{4}\hat\sig_3}\left(\ba{cc}1&r(k_0)\\ \ol{r(k_0)}&1+|r(k_0)|^2\ea\right).
\ee

Similarly, as $\im k<0$,
\be
\ba{l}
\tilde M^{(model)-}_{11}=e^{-\frac{3\pi}{4}\nu(k_0)}D_{-a_1}(e^{\frac{3i\pi}{4}}k)\\
\tilde M^{(model)-}_{21}=\frac{1}{\beta^{k_0}_{12}}e^{-\frac{3\pi}{4}\nu(k_0)}[\frac{d}{dk}D_{-a_1}(e^{\frac{3i\pi}{4}}k)-\frac{ik}{2}D_{-a_1}(e^{\frac{3i\pi}{4}}k)]\\
\tilde M^{(model)-}_{22}=e^{\frac{\pi}{4}\nu(k_0)}D_{a_1}(e^{\frac{i\pi}{4}}k)\\
\tilde M^{(model)-}_{12}=\frac{1}{\beta^{k_0}_{21}}e^{\frac{\pi}{4}\nu(k_0)}[\frac{d}{dk}D_{a_1}(e^{\frac{i\pi}{4}}k)+\frac{ik}{2}D_{a_1}(e^{\frac{i\pi}{4}}k)].
\ea
\ee

Hence, from
\be
(\tilde M^{(model)}_{-}(y,t,k))^{-1}(\tilde M^{(model)}_{+}(y,t,k))=\left(\ba{cc}1&r(k_0)\\ \ol{r(k_0)}&1+|r(k_0)|^2\ea\right),
\ee
we have
\be
\ba{rcl}
r(k_0)&=&\displaystyle{\frac{e^{-\frac{\pi}{2}\nu(k_0)}}{\beta^{k_0}_{21}}}\mbox{Wr}[D_{a_1}(e^{\frac{i\pi}{4}}k),D_{a_1}(e^{-\frac{3i\pi}{4}}k)]\\
{}&=&\displaystyle{\frac{e^{-\frac{\pi}{2}\nu(k_0)}}{\beta^{k_0}_{21}}\frac{\sqrt{2\pi}e^{\frac{i\pi}{4}}}{\Gam(-a_1)}}.
\ea
\ee
Here $\mbox{Wr}[D_{a_1}(e^{\frac{i\pi}{4}}k),D_{a_1}(e^{-\frac{3i\pi}{4}}k)]$ denotes the Wronskian  determinant.
By $\beta^{k_0}_{12}\beta^{k_0}_{21}=\nu(k_0)$, we can compute
\be
\beta^{k_0}_{12}=-\frac{e^{-\frac{\pi}{2}\nu(k_0)}}{\ol{r(k_0)}}\frac{\sqrt{2\pi}e^{-\frac{i\pi}{4}}}{\Gam(a_1)}.
\ee

Similarly, we can show that the solution of the model problem corresponding to $-k_0$ as follows,
\par
\begin{subequations}
for $k\in \im k>0$,
\be
\ba{l}
\tilde M^{(model)+}_{11}=e^{-\frac{3\pi}{4}\nu(-k_0)}D_{a_2}(e^{-\frac{3i\pi}{4}}k)\\
\tilde M^{(model)+}_{21}=\frac{1}{\beta^{-k_0}_{12}}e^{-\frac{3\pi}{4}\nu(-k_0)}[\frac{d}{dk}D_{a_2}(e^{-\frac{3i\pi}{4}}k)+\frac{ik}{2}D_{a_2}(e^{-\frac{3i\pi}{4}}k)]\\
\tilde M^{(model)+}_{22}=e^{\frac{\pi}{4}\nu(-k_0)}D_{-a_2}(e^{-\frac{i\pi}{4}}k)\\
\tilde M^{(model)+}_{12}=\frac{1}{\beta^{-k_0}_{21}}e^{\frac{\pi}{4}\nu(-k_0)}[\frac{d}{dk}D_{-a_2}(e^{-\frac{i\pi}{4}}k)-\frac{ik}{2}D_{-a_2}(e^{-\frac{i\pi}{4}}k)];
\ea
\ee
for $k\in \im k<0$,
\be
\ba{l}
\tilde M^{(model)-}_{11}=e^{\frac{\pi}{4}\nu(-k_0)}D_{a_2}(e^{\frac{i\pi}{4}}k)\\
\tilde M^{(model)-}_{21}=\frac{1}{\beta^{-k_0}_{12}}e^{\frac{\pi}{4}\nu(-k_0)}[\frac{d}{dk}D_{a_2}(e^{\frac{i\pi}{4}}k)+\frac{ik}{2}D_{a_2}(e^{\frac{i\pi}{4}}k)]\\
\tilde M^{(model)-}_{22}=e^{-\frac{3\pi}{4}\nu(-k_0)}D_{-a_2}(e^{\frac{3i\pi}{4}}k)\\
\tilde M^{(model)-}_{12}=\frac{1}{\beta^{-k_0}_{21}}e^{-\frac{3\pi}{4}\nu(-k_0)}[\frac{d}{dk}D_{-a_2}(e^{3\frac{i\pi}{4}}k)-\frac{ik}{2}D_{-a_2}(e^{\frac{3i\pi}{4}}k)],
\ea
\ee
\end{subequations}
Then, we have
\be
\beta^{-k_0}_{12}=\frac{e^{-\frac{\pi}{2}\nu(-k_0)}}{\ol{r(-k_0)}}\frac{\sqrt{2\pi}e^{\frac{i\pi}{4}}}{\Gam(-a_2)},\quad
\beta^{-k_0}_{21}=-\frac{e^{-\frac{\pi}{2}\nu(-k_0)}}{r(-k_0)}\frac{\sqrt{2\pi}e^{-\frac{i\pi}{4}}}{\Gam(a_2)},
\ee
with $a_2=i\nu(-k_0)$, $\beta^{-k_0}_{12}=-i\tilde M^{(-k_0)1}_{12}$ and $\beta^{-k_0}_{21}=i\tilde M^{(-k_0)1}_{21}$. The definition of $\tilde M^{(-k_0)1}_{ij}$ is similar as $\tilde M^{(k_0)1}_{ij}$ (see equation (\ref{tildemk01def})).

\par
Then, similarly as the formulae (5.110) in \cite{jiansp}, we can know the leading order asymptotic behavior of the solution $u(y,t)$ as $t\rightarrow \infty$ comes from the $\tilde M^{(k_0)1}_{12}$ and $\tilde M^{(-k_0)1}_{12}$, that is,
 \be
 iu(y,t)e^{-2d}=\frac{1}{\sqrt{k_0t}}\left[\tilde M^{(k_0)1}_{12}+\tilde M^{(-k_0)1}_{12}\right].
 \ee
A direct calculation shows that the leading order asymptotic behavior of the solution $u(y,t)$ is
\be
iu(y,t)e^{-2d}=\dta_0^2\frac{1}{\sqrt{t}}\left[
\sqrt{\frac{-\nu(-k_0)}{k_0}}e^{i\phi_1}-\sqrt{\frac{-\nu(k_0)}{k_0}}e^{-i\phi_2}
\right],
\ee
where
\begin{subequations}
\be
\ba{l}
\phi_1=\frac{i\pi}{4}+\mbox{arg}(r(-k_0))+\mbox{arg}(\Gam(i\nu(-k_0)))+\nu(-k_0)\ln(\frac{k_0^3}{t})\\
{}-\nu(k_0)\ln(4k_0^2)-\frac{1}{\pi}\int_{-k_0}^{k_0}\ln(s+k_0)d\ln(1+|r(s)|^2)-\frac{t}{k_0},
\ea
\ee
\be
\ba{l}
\phi_2=\frac{i\pi}{4}-\mbox{arg}(r(-k_0))-\mbox{arg}(\Gam(-i\nu(k_0)))+\nu(k_0)\ln(\frac{k_0^3}{t})\\
{}-\nu(-k_0)\ln(4k_0^2)+\frac{1}{\pi}\int_{-k_0}^{k_0}\ln(k_0-s)d\ln(1+|r(s)|^2)-\frac{t}{k_0}.
\ea
\ee
\end{subequations}

The relation $c_+$ lies between the new scale $y$ and the original scale $x$,
\be
ic_+=-\dta_1+o(1),\quad t\rightarrow \infty.
\ee
Hence, the solution $u(x,t)$ has the asymptotic (\ref{uxtasy-final}).
%
%

\bigskip

{\bf Acknowledgements}
This work was supported by National Science Foundation of China under project No. 11671095.
Xu was also supported by National Science Foundation of China under project No. 11501365, Shanghai Sailing Program
supported by Science and Technology Commission of Shanghai Municipality under Grant No. 15YF1408100, Shanghai youth teacher assistance program No. ZZslg15056.
And Xu wants to give many thanks to Shanghai Center Mathematical Science, many of this work was done during Xu visited there.

\end{document}